\newtheorem{theorem}{Theorem}
\newtheorem{observation}{Observation}
\newtheorem{definition}{Definition}
\newtheorem{corollary}{Corollary}
\newtheorem{lemma}{Lemma}
\newtheorem{example}{Example}
\definecolor{gray}{gray}{0.5}
\definecolor{darkgreen}{rgb}{0,0.5,0}
\newcommand{\mynote}[2]{\ifnum\Comments=1\textcolor{#1}{#2}\fi}
\newcommand{\reals}{\mathbb{R}}
\newcommand{\ones}{\mathbbm{1}}
\tikzstyle{critical}=[solid]
\tikzstyle{indifference}=[dashed]
\tikzstyle{harmless}=[red,fill opacity=0.4]
\tikzstyle{fotakis}=[blue,fill opacity=0.4]
\tikzstyle{noitem}=[red,fill opacity=0.4]
\tikzstyle{item1}=[blue,fill opacity=0.4]
\tikzstyle{item2}=[green,fill opacity=0.4]
\tikzstyle{allNon}=[green,fill opacity=0.4]
\tikzstyle{restrictedNon}=[black,fill opacity=0.4]
\newcommand{\axes}{%
  \path (0,0) edge[->,>=latex] node[below] {$\theta_1'$} (2.1,0);
  \path (0,0) edge[->,>=latex] node[left] {$\theta_2'$} (0,2.2);  }
 \newcommand{\axesaxes}{%
  \path (0,0) edge[->,>=latex] node[below] {$\theta_1$} (2.1,0);
  \path (0,0) edge[->,>=latex] node[left] {$\theta_2$} (0,2.2);  }
\newcommand{\putpoint}[3]{%
  \fill (#1) circle [radius=1pt] node[#2] {$#3$};  }
\begin{document}
\title{Partial Verification as a Substitute for Money}
\author{Sofia Ceppi\\
PROWLER.io\\
sofia@prowler.io\\
\And
{\bf \Large Ian Kash}\\
University of Illinois at Chicago\\
iankash@uic.edu\\
\And
{\bf \Large Rafael Frongillo}\\
University of Colorado Boulder\\
raf@colorado.edu\\ 
}
\maketitle

\begin{abstract}
Recent work shows that we can use partial verification instead of money to implement truthful mechanisms.  In this paper we develop tools to answer the following question.  Given an allocation rule that can be made truthful with payments, what is the minimal verification needed to make it truthful without them?  Our techniques leverage the geometric relationship between the type space and the set of possible allocations.
\end{abstract}

\section{Introduction}

Mechanism design studies how to realize desirable outcomes to optimization problems in settings with self-interested agents.  The most common tool to achieve desirable outcomes is the use of payments, and there is a large literature focusing on the following question.  Given an allocation rule, which specifies the outcome that should be selected given the types of the agents, do there exist payments to turn it into an incentive compatible mechanism (to {\em implement} it)~\cite{guesnerie1984,saks2005,ashlagi2010,frongillo2014}?

Recent work has identified partial verification as a useful alternative to money to implement incentive compatible mechanisms. The idea is that the mechanism designer can detect some possible agent misreports, either by preventing them outright, or by penalizing the agent (e.g., by excluding her from the market). 
The power of partial verification is that the mechanism designer need not provide agents with incentives for a subset of the possible types they can report if verification of these types is in place.  An example of such a partial verification is agents not being able to report a higher valuation for any assignment than is true.  They are free, however, to report a lower value.

This specific type of verification has been adopted by Fotakis, Krysta and Ventre~(\citeyear{fotakis2014}). They consider the case in which a government is auctioning business licenses for cities under its administration, and companies want to get a license for some subset of cities to sell their stock of goods to the market.  The verification assumption is that the government, that acts as auctioneer, can verify if the winner actually has sufficient goods in stock and thus prevent overbidding.
In a particularly suggestive result, they showed that this verification suffices to implement all allocation rules that are implementable with money in single-minded combinatorial auctions.
(In fact they show that a weaker verification, where the agent cannot overbid only on the bundle received, suffices.)
Interestingly, this verification no longer suffices for agents who are $k$-minded, $k \geq 2$.  Other work in the literature (see Related Work)
has focused on specific scenarios like facility location and combinatorial auctions, and identifies the sets of verification assumptions that guarantee incentive compatibility of mechanisms.

Our work takes a different approach, in that we build tools to understand the power of verification independent of a specific scenario.
In particular, we answer the following question: given an allocation rule that can be made truthful with payments, what is the minimal verification needed to make it truthful without them? Essentially, similar to~\cite{diodato2016}, we aim to inform the designer of the resources needed for verification.  In contrast to this work, which focuses on facility location, we propose a geometric characterization of the verification needed to use any implementable-with-payment allocation rule in a scenario without transfers, while guaranteeing strategyproofness.

We introduce the concept of the \emph{harmless set of types} as those which do not need to be verified for a given set of single-agent allocation rules.  Our basic building block is a characterization of the structure of harmless sets for allocation rules which only assign two possible allocations and are implementable with payments.
We then show how this can be extended to characterize harmless sets for more general sets of single-agent implementable-with-payments allocation rules.
Our focus on sets of rules derives from the observation that multiagent allocation rules are, from the perspective of a single agent, just a set of allocation rules parameterized by the types of the other agents.

Our characterization highlights a split in the nature of harmless sets of types for deterministic and randomized mechanisms.  Deterministic and universally truthful mechanisms both have large harmless sets of types, while in contrast the harmless set for truthful-in-expectation mechanisms is quite restricted.  Our results are constructive and provide geometric insights for these findings.

The central contribution of our approach is its generality: our analysis could in principle be applied to any mechanism or class of mechanisms. Moreover, while our results are often stated for two allocations and single-agent settings, they also apply directly to more than two allocations and multi-agent settings, by standard arguments.
We also examine two extensions: allocation-dependent verification, a weaker form of verification that can expand the harmless set of types, and reverse verification, where the reported type of the agent is considered when computing the types that needs to be verified.
 We conclude with examples showing how our approach can be used in three application domains and how our results replicate and extend existing results in the literature.

\section{Related Work}
\label{sec:relatedwork}

Several works in both the economics and computer science literatures focus on the design of incentive compatible mechanisms with verification~\cite{green1986,fotakis2015,fotakis2015b,penna2009,ventre2014} to overcome the Gibbard-Satterthwaite impossibility result~\cite{gibbard,satterthwaite} for mechanisms without money.

In particular, given a mechanism, these works aim to reduce the types agents can report to the ones that do not bring benefit to them. This is done by  either assuming that the reportable type space varies depending on the true type of the agents and that this is known to the mechanism~\cite{green1986} or that the mechanism can verify part of the type space and penalize agents that misreport in that space~\cite{fotakis2014,diodato2016}.
In particular, given a mechanism, these works aim to reduce the types agents can report to the ones that do not bring benefit to them. This is done by  either assuming that the reportable type space varies depending on the true type of the agents or that the mechanism can verify part of the type space and penalize agents that misreport in that space.

The power of verification in the design of mechanisms without money has been studied in a number of applications including scheduling of unrelated machines \cite{koutsoupias2014}, combinatorial auctions~\cite{krysta2015}, and assignment and allocation problems~\cite{dughmi2010,guo2010}. A particular focus has been on the design of mechanisms with verification yielding good approximate solutions to the problem of facility location on a line~\cite{procaccia2009,serafino2016,serafino2014,diodato2016}.
Among the work that study the power of verification, several focused on the case in which the mechanism is without money. The design of mechanism without money has been studied for different applications: from scheduling of unrelated machine \cite{koutsoupias2014} and combinatorial auctions~\cite{krysta2015} to assignment and allocation problems~\cite{dughmi2010,guo2010}.
Often, the focus has been on the design of approximated mechanisms with the aim to face situations in which optimal solutions do not exist. In particular, the most considered problem is the one of facilities location on a line~\cite{procaccia2009,serafino2016,serafino2014,diodato2016}.
Much of this literature has focused on identifying verifications which seem natural for a particular application and suffice to design useful mechanisms
~\cite{koutsoupias2014,serafino2016,fotakis2014}, in contrast to the present work which fixes a set of allocation rules and asks what verification would be minimally necessary to render the mechanisms truthful.
Most similarly to our own work, Ferraioli et al.~(\citeyear{diodato2016}) have considered the question the minimum set of assumptions needed in the facility location setting.

\section{Preliminaries}

In this section, we focus on mechanism design with a single agent.  Let $S$ denote the set of assignments, one of which the agent will receive, $|S| = m$.  Let $A$ denote the set of allocations, where an allocation $a  \in A$ is a probability distribution over assignments. Formally, $A \subseteq \{ a \in [0,1]^m : \sum_{s \in S} a(s) = 1\}$.  One can think of assignments as a set of mutually exclusive outcomes, and allocations as distributions over these outcomes, which for example will be point distributions when we consider deterministic mechanisms.

We use $\theta =  [\theta_{s_1}, \theta_{s_2}, ..., \theta_{s_m} ] \in \reals^m$ to denote the type of the agent, i.e., her private information, where $\theta_{s_i}$ is the agent's value for the assignment $s_i$.   The set of possible agent types is denoted $\Theta \subseteq \mathbb{R}^m$.
A single-agent direct revelation mechanism is denoted $\mathcal{M} = \{f, p\}$, where $f : \Theta \rightarrow A$ is the allocation rule and $p: \Theta \rightarrow \mathbb{R}$ is the payment rule.  Under this mechanism, the utility of an agent with type $\theta$ who reports type $\hat{\theta}$ is
 $u^{\mathcal{M}}(\theta, \hat{\theta}) = f(\hat{\theta}) \cdot \theta - p(\hat{\theta})$.

We now introduce several terms and definitions which we will use throughout the paper.  
Additionally, we summarize in Table~\ref{tab:notation} the notation used in the paper which will be introduced in the following sections.

\medskip\textbf{Incentive Compatibility.}~
A mechanism is incentive compatible (i.e., truthful) if the agent is incentivized to communicate to the mechanism her true type.  Since agents are rational, to guarantee incentive compatibility, the mechanism must guarantee that each agent is better off when she reports her true type than when she misreports, i.e., $u^{\mathcal{M}}(\theta, \theta) \geq u^{\mathcal{M}}(\theta, \hat{\theta})$ for all $\theta, \hat{\theta} \in \Theta$.

\medskip\textbf{Implementable-with-Payments Allocation Rule.}~
An allocation rule $f$ is implementable-with-payments if there exists a payment rule $p$ such that the mechanism $\mathcal{M} = \{f,p\}$ is incentive compatible.

\medskip\textbf{Deterministic Mechanism.}~
A mechanism is deterministic if each allocation $a \in A$ selects one assignment with probability $1$ and all the other outcomes with probability $0$, i.e., $a(s) = 1, s \in S$ and $a(s') = 0, \forall s' \neq s, s' \in S, \forall a \in A$.  In this sense, all mechanisms, including deterministic mechanisms, are randomized.

\medskip\textbf{Universally Truthful Mechanism.}~
We say an incentive compatible randomized mechanism $\mathcal{M} = (f,p)$ is univerally truthful if it is a distribution over incentive compatible deterministic mechanisms. That is, there is a set of deterministic incentive compatible mechanisms $S_{\mathcal{M}}$ known as the support of $\mathcal{M}$ and a probability distribution $\alpha$ such that  $f_{\mathcal{M}} = \sum_{\mathcal{M}' \in S_{\mathcal{M}}} \alpha_{\mathcal{M}'}  f_{\mathcal{M}'}$.

\medskip\textbf{Truthful in Expectation Mechanism.}~
If a randomized mechanism is incentive compatible, we say that it is truthful in expectation.  (The expected value is implicit in the dot product in the definition of $u^{\mathcal{M}}(\theta, \hat{\theta})$.) 

\medskip\textbf{Truthful with Verification Mechanism.}~
Let $V \subseteq \Theta \times \Theta$ be the set of pairs $(\theta,\hat{\theta})$ that the designer can verify and denote with $\mathcal{M}_v =(f,V)$ a mechanism with verification. $\mathcal{M}_v$ is truthful if for all pairs of types $(\theta,\hat{\theta})$ either $f(\theta) \cdot \theta \geq f(\hat{\theta}) \cdot \theta$ or $(\theta,\hat{\theta}) \in V$. Intuitively, either an agent with type $\theta$ prefers not to report $\hat{\theta}$ or the mechanism designer can detect or prevent such a report.

\section{Basics of Partial Verification}

In this section, we develop our basic tools for reasoning about what types do not need to be verified.  In particular, given a type $\theta$, if the agent can never benefit by reporting some other type $\hat{\theta}$, then from the perspective of the mechanism designer it is unnecessary to be able to verify $(\theta, \hat{\theta})$, i.e., verify that $\hat{\theta}$ is not the agent's true type.  We call such types {\em harmless}.~\footnote{
To simplify the characterization, we focus on harmless types instead of types that must be verified.  Given a type $\theta$ and allocation rule $f$, a pair $(\theta, \hat{\theta})$ is in $V$ if $\hat{\theta} \in \Theta \setminus H(\theta,f)$. For brevity, we say that the set of types which must be verified is $\Theta \setminus H(\theta,f)$.
That is, the mechanism will be truthful as long as the mechanism designer can verify that the agent does not have true type $\theta$ when she reported $\theta'$, for all $\theta' \in \Theta \setminus H(\theta,f)$.}
\begin{definition}
\label{def:harmless}
Given a type $\theta$ and allocation rule $f$, the \emph{harmless set of types} $H(\theta,f)$ is the set composed by the types $\hat{\theta} \in \Theta$ such that $f(\theta) \cdot \theta \geq f(\hat{\theta}) \cdot \theta$.
\end{definition}

We can also talk about the harmless set of types for multiple allocation rules.
This allows us to take our results about the single agent setting and apply them to settings with multiple agents.  
We can turn the multiagent allocation rule into a single agent allocation rule by ``plugging in'' the types of the other agents.  However, in doing so we end up with different allocation rules depending on the types of those other agents.  Thus, capturing a multiagent allocation rule in a single agent setting requires a whole set of allocation rules (see the Multi-Agent Mechanisms section).

As an added benefit, this generality allows us to talk about not just single mechanisms, but whole families of mechanisms.  Later, we exploit this flexibility to draw a sharp contrast between the harmless set for all universally truthful mechanisms and the harmless set for all truthful in expectation mechanisms.
\begin{definition}
Let $F$ be a set of allocation rules.  Then the \emph{harmless set of types} $H(\theta,F)$ is the set composed by the types $\hat{\theta} \in \Theta$ such that $f(\theta) \cdot \theta \geq f(\hat{\theta}) \cdot \theta$ for every $f \in F$.
\end{definition}

It is immediate from this definition that the harmless set of types of a set of allocation rules is the intersection of their individual harmless sets.
This is because our definition imposes a strong requirement for a type to be harmless: it identifies a type as harmless only if it is harmless for every allocation rule in the set.  That is, there is no scenario under which the agent can benefit from reporting a harmless type.  This strong definition is in the same spirit as the definition of incentive compatibility; we only want to declare a type harmless if the mechanism designer never has to worry about that type being reported.
\begin{observation}
\label{obs:intersection}
The harmless set of types $H(\theta, F)$ corresponds to the intersection of the harmless set of types of every allocation rule $f \in F$, i.e., $H(\theta, F) = \cap_{f \in F} H(\theta,f)$.
\end{observation}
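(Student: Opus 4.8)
The plan is to unfold both definitions and show set equality by mutual inclusion. Recall that $H(\theta,F)$ is defined as the set of $\hat\theta \in \Theta$ satisfying $f(\theta)\cdot\theta \geq f(\hat\theta)\cdot\theta$ for \emph{every} $f \in F$, while for a single rule $f$, the set $H(\theta,f)$ is the set of $\hat\theta \in \Theta$ satisfying the single inequality $f(\theta)\cdot\theta \geq f(\hat\theta)\cdot\theta$. The claim $H(\theta,F) = \cap_{f \in F} H(\theta,f)$ is then essentially a restatement of the logical fact that a universally quantified conjunction of conditions defines the intersection of the sets each condition defines.

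First I would prove the forward inclusion $H(\theta,F) \subseteq \cap_{f \in F} H(\theta,f)$. Take any $\hat\theta \in H(\theta,F)$. By definition this means $f(\theta)\cdot\theta \geq f(\hat\theta)\cdot\theta$ holds for every $f \in F$. In particular, fixing an arbitrary $f \in F$, the inequality holds for that $f$, so $\hat\theta \in H(\theta,f)$ by Definition~\ref{def:harmless}. Since $f$ was arbitrary, $\hat\theta$ lies in $H(\theta,f)$ for all $f \in F$, hence $\hat\theta \in \cap_{f \in F} H(\theta,f)$.

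Next I would prove the reverse inclusion $\cap_{f \in F} H(\theta,f) \subseteq H(\theta,F)$. Take any $\hat\theta \in \cap_{f \in F} H(\theta,f)$. Then for every $f \in F$ we have $\hat\theta \in H(\theta,f)$, which by Definition~\ref{def:harmless} means $f(\theta)\cdot\theta \geq f(\hat\theta)\cdot\theta$. As this inequality holds for all $f \in F$, the defining condition of $H(\theta,F)$ is satisfied, so $\hat\theta \in H(\theta,F)$. Combining the two inclusions gives the desired equality.

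There is no real obstacle here: the statement is a direct consequence of the fact that the condition defining $H(\theta,F)$ is precisely the conjunction, over all $f \in F$, of the conditions defining the individual $H(\theta,f)$. The only subtlety worth flagging explicitly is that both constructions draw candidate types from the same ambient type space $\Theta$, so membership is compared over a common domain and no boundary or domain mismatch can arise; the argument goes through verbatim whether $F$ is finite or infinite.
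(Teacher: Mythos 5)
Your proof is correct and matches the paper's reasoning: the paper treats this observation as immediate from the definitions (noting that the defining condition of $H(\theta,F)$ is exactly the conjunction over $f \in F$ of the conditions defining each $H(\theta,f)$), and your two-inclusion argument simply writes out that same observation in full.
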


Using $H(\theta, F)$, we can express the minimal verification needed to guarantee that a implementable-with-payments allocation rule is also truthful without them.  In particular, this minimal verification corresponds to the set $V$ where $(\theta, \hat{\theta}) \in V$ if and only if $\hat{\theta} \in \Theta \setminus H(\theta,f)$.

\begin{table}
  \begin{center} 
    \renewcommand{\arraystretch}{1.2}
    \begin{tabular}{@{}ll@{}}
      \toprule
      Notation & Definition
    \\
    \midrule
    $f$   &      Allocation rule \\ 
    $F$ &  Set of allocation rules\\ 
    $H(\theta,f)$ & \parbox[t]{0.6\linewidth}{Harmless set of types given true\\[-1pt] type $\theta$ and allocation rule $f$} \\[10pt] 
    $H(\theta,F)$ & \parbox[t]{0.6\linewidth}{Harmless set of types given true\\[-1pt] type $\theta$ and set of allocation rules $F$} \\[10pt] 
    $f_{\{a_i, a_j\} }$   &  Separating allocation rule \\ 
    $\bar{F}_{\{a_i, a_j\}}$    &    Set  of separating allocation rules $f_{\{a_i, a_j\} }$ \\ 
    $\bar{F}$ &  Set of $\bar{F}_{\{a_i, a_j\}}, \forall \{a_i, a_j\} \in A$  \\ 
    $l_{f, \{ a_i,a_j \}}$	 & Allocation hyperplane over $\{ a_i,a_j \}$ \\ 
    $I_{\{ a_i,a_j \}}$   & Indifference hyperplane \\ 
    $L_{\{ a_i,a_j \}}$   & Set of $l_{f, \{ a_i,a_j \}}$ parallel to $I_{\{ a_i,a_j \}}$ \\ 
    $f^{\theta}$      &  Critical allocation rule \\ 
    $l_{f^{\theta}, \{ a_i,a_j \}}$  & Critical allocation hyperplane \\ 
    $\bar{F}^{\theta}$ &  Set of critical allocation rules \\ 
    $L^{\theta}$    & Set of critical allocation hyperplanes \\ 
    \bottomrule
    \end{tabular}
\end{center}
\caption{Notation used throughout the paper.}
\label{tab:notation}
\end{table}

\subsection{Harmless sets with two allocations (Informally)}
\label{sec::2dim}
We now introduce our main tools for understanding the harmless sets of implementable-with-payments allocation rules.  At first, we characterize the harmless sets of implementable-with-payments allocation rules which have exactly two allocations in their range, and then show that there is a sense in which this captures everything we need to know about the harmless set of types even when there are more than two allocations.
Before giving a formal treatment of this setting, we walk through it more informally.

With only two allocations, incentive compatible mechanisms have a simple, well-known form.  By the taxation principle, any incentive compatible mechanism consists of assigning a price to each allocation and letting the agent choose which allocation it prefers to pay for~\cite{guesnerie1984}.  Thus, if we call the two allocations $a_1$ and $a_2$ and assign them prices $p_1 = p(a_1)$ and $p_2 = p(a_2)$, an agent with type $\theta$ can be assigned $f(\theta) = a_1$ by an incentive compatible mechanism only if $a_1 \cdot \theta - p_1 \geq a_2 \cdot \theta - p_2$ (and similarly for $a_2$).  Rewriting, it is easy to see that the types that are indifferent and could be assigned either allocation are those who satisfy $(a_1 - a_2) \cdot \theta = p_1 - p_2$.  That is, these types all lie on a hyperplane.  Further, the two half spaces on either side of this hyperplane correspond to the sets of types that prefer each allocation at the given prices,  i.e., if $\theta$ is in the interior of one halfspace and $\theta'$ is in the interior of the other then  $u^{\mathcal{M}}(\theta, \theta) > u^{\mathcal{M}}(\theta, \theta')$ and $u^{\mathcal{M}}(\theta', \theta') > u^{\mathcal{M}}(\theta', \theta)$.

Since such a hyperplane is uniquely identified by a {\em relative price} $c = p_1 - p_2$, every implementable-with-payments allocation rule $f$ can be associated with the hyperplane $(a_1 - a_2) \cdot \theta = c$ for some real number $c$.  Note however, that there will in general be many allocation rules associated with a single hyperplane because types on the hyperplane are indifferent between the allocations whose prices difference is $c$ and so can be assigned to either allocation by an implementable-with-payments allocation rule.

Now consider a particular such $f$ and a type $\theta$.  There are five possible cases for $H(\theta,f)$. 

 \textbf{Case 1:} $\theta \cdot a_1 > \theta \cdot a_2$ and $f(\theta) = a_1$. An agent with type $\theta$ is already receiving her preferred outcome, so the agent cannot gain by reporting another type.  Thus $H(\theta,f) = \Theta$.  

 \textbf{Case 2:} $\theta \cdot a_1 > \theta \cdot a_2$ and $f(\theta) = a_2$.  An agent with type $\theta$ can benefit by reporting any type $\theta'$ such that $f(\theta') = a_1$, so $H(\theta,f) = \Theta - \{ \theta' : f(\theta') = a_1$\}.  By the above analysis, $H(\theta,f)$ contains all types on the side of the hyperplane associated with $f$ where types prefer $a_2$ at relative price $c$ implied by $f$.  It may also contain some types on the hyperplane, if $f$ happens to assign them $a_2$. 

\textbf{Case 3:} $\theta \cdot a_1 = \theta \cdot a_2$. This case is degenerate and the agent with type $\theta$ is totally indifferent between the two allocations, so $H(\theta,f) = \Theta$ regardless of the $f$ chosen.

\textbf{Cases 4 and 5:} symmetric to Cases 1 and 2.

So what does $H(\theta,F)$ look like where $F$ is the set of all such $f$?  By Observation~\ref{obs:intersection}, we need to take the intersection of the harmless sets.  In the degenerate case 3, this yields $H(\theta,F) = \Theta$.  Otherwise, all that matters is the $f$ for which case 2 applies.  That is we care about the $f$ which correspond to hyperplanes with $c$ such that $(a_1 - a_2) \cdot \theta \leq c$.  The intersection of the harmless sets of all these hyperplanes is the set of $\theta'$ which are ``below'' all of them.  This is entirely determined by the ``lowest'' such hyperplane, the one where $(a_1 - a_2) \cdot \theta = c$. Consider the following example.

\begin{example}
\label{ex:2determistic}
Consider the case of a deterministic incentive compatible mechanism with two possible assignments, $s_1$ and $s_2$, and two allocations, $a_1$ and $a_2$, such that $a_1(s_1) =1$ and $a_2(s_2)= 1$, i.e., allocation $a_1$ assigns $s_1$ to the agent with probability $1$, while allocation $a_2$ assigns $s_2$ with probability $1$. Furthermore, assume that the agent's type is $\theta = (\theta_{s_1}, \theta_{s_2})$ with $\theta_{s_1} < \theta_{s_2}$.  This setting is illustrated in Figure~\ref{fig:det-mech-two-plus-zero} (a), where $\theta_1 = \theta_{s_1}$ and $\theta_2 = \theta_{s_2}$.

The hyperplane of types $\theta' \in \Theta$ for which $\theta'_{s_1} = \theta'_{s_2}$ is the 45 degree line from the origin, and which we refer to as the {\em indifference hyperplane}.  Note that it corresponds to taking $c = 0$, and that changing $c$ just translates this line while keeping it at 45 degrees. 
The translations of this line for which $(a_1 - a_2) \cdot \theta \geq c$, i.e., $\theta_{s_1} - \theta_{s_2} \geq c$, are the lines that in the figure would be above $\theta$; the lowest of these is the one which passes through $\theta$, which we refer to as the {\em critical allocation hyperplane}.  
The harmless set $H(\theta,F)$ is the set of types below this critical allocation hyperplane.  It corresponds to the types that prefer $s_1$ relative to $s_2$ more strongly that $\theta$.  That is, those $\theta'$ where $\theta_{s_1} - \theta_{s_2} < \theta_{s_1}' - \theta_{s_2}'$.
\end{example}

\subsection{Formal treatment of two allocations}
\label{sec::2dimformal}

We define the concepts introduced in the previous section and formally prove how to identify the harmless set of types of implementable-with-payments allocation rules.
We start by defining a separating allocation rule i.e., an allocation rule that can be associated with a hyperplane that divides the space in two half spaces such that all the types in one half space receive the same allocation, and a set of such allocations.

\begin{definition}
An allocation rule $f_{\{a_i, a_j\} }$ is \emph{separating} if\\ $f_{\{a_i, a_j\} }: \Theta \rightarrow \{a_i, a_j\} \subseteq A$ and there exists a hyperplane which separates the type space $\Theta$ in two open half-spaces $\Theta', \Theta'' \subseteq \Theta$ such that the closure of their union is $\Theta$ and if $\theta \in \Theta'$ then$f(\theta) = a_i$ while if $\theta \in \Theta''$ then $f(\theta) = a_j$.  (For brevity, when the allocation pair $\{a_i, a_j\}$ is clear from context we suppress it and simply write $f$.)\footnote{Note that any implementable-with-payment allocation rule is also a separating allocation rule.}
\end{definition}

\begin{definition}
Let $\bar{F}_{\{a_i, a_j\}}$ denote the set of separating allocation rules $f_{\{a_i, a_j\} }$.
Then let  $\bar{F} = \cup_{\{a_i, a_j\} \subseteq A}\bar{F}_{\{a_i, a_j\}}$.
\end{definition}

Given a separating allocation rule, we are interested in the hyperplane it induces, in the following sense.

\begin{definition}
The allocation hyperplane $l_{f,A'}$ over allocation set $A' = \{a_i, a_j\}$ is the hyperplane that separates the two half-spaces identified by the separating allocation rule $f \in \bar{F}_{\{a_i, a_j\}}$. In the remaining of the paper, we will say that $l_{f,A'}$ is {\em induced by} the allocation rule $f \in \bar{F}_{\{a_i, a_j\}}$.
\end{definition}

Of course, selecting two allocations and a hyperplane is not sufficient for an allocation rule to be implementable-with-payments.  By the taxation principle, the hyperplane must consist of all the types which are indifferent between the two allocations at a particular price.  Further, the remaining types must receive the ``correct'' allocation.  That is, those which would be willing to pay more than the price to get one allocation instead of the other are the ones that receive it.  Such hyperplanes are exactly those parallel to the hyperplane of types indifferent between the two allocations. 
We capture these requirements in the following definitions.

\begin{definition}
Given $\{a_i, a_j \} \subseteq A$, the \emph{indifference hyperplane} $I_{\{a_i, a_j\}}$ is the hyperplane
composed of types where the agent is indifferent between allocation $a_i $ and  allocation $a_j$, i.e. all the points $\theta' \in \Theta$ where $a_i \cdot \theta' = a_j \cdot \theta'$.
\end{definition}

\begin{definition}
Let $L_{\{a_i, a_j\}}$ be the set of allocation hyperplanes $l_{f,\{a_i, a_j\}}$ parallel to indifference hyperplane $I_{\{a_{i}, a_{j}\}}$.
\end{definition}

The following observation formally summarizes the preceding discussion by showing that the hyperplanes in the set $L_{\{a_i, a_j\}}$ are only the ones that are induced by an implementable-with-payments allocation rules given the allocations $\{a_i, a_j\}$, and thus that the implementable-with-payments allocation rules are separating allocation rules.

\begin{observation}
\label{obs:power}
A hyperplane is in $L_{\{a_i, a_j\}}$ if and only if it is induced by an implementable-with-payments allocation rule $f \in \bar{F}_{\{a_i, a_j \}}$.
\end{observation}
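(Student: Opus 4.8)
The plan is to prove Observation~\ref{obs:power} as a biconditional, establishing each direction separately by invoking the taxation principle for two-allocation mechanisms discussed in the informal section. The key algebraic fact I would rely on throughout is that an allocation rule $f \in \bar{F}_{\{a_i, a_j\}}$ is implementable-with-payments if and only if there exist prices $p_i = p(a_i)$ and $p_j = p(a_j)$ such that each type receives the allocation it weakly prefers at those prices, i.e., $f(\theta) = a_i$ whenever $a_i \cdot \theta - p_i \geq a_j \cdot \theta - p_j$ and $f(\theta) = a_j$ otherwise (breaking ties consistently).

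For the forward direction, I would start with a hyperplane $l \in L_{\{a_i, a_j\}}$, so that $l$ is parallel to the indifference hyperplane $I_{\{a_i, a_j\}}$ defined by $a_i \cdot \theta' = a_j \cdot \theta'$, equivalently $(a_i - a_j) \cdot \theta' = 0$. Since $l$ is parallel to $I_{\{a_i,a_j\}}$, it has the form $(a_i - a_j) \cdot \theta' = c$ for some real constant $c$. I would then define the payment rule by setting $p_i - p_j = c$ (for instance $p_i = c$, $p_j = 0$) and define $f$ to assign $a_i$ on the half-space where $(a_i - a_j) \cdot \theta' > c$ and $a_j$ on the other, assigning the boundary consistently. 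By construction this $f$ is separating with induced allocation hyperplane $l$, and the utility inequality $a_i \cdot \theta - p_i \geq a_j \cdot \theta - p_j$ rearranges exactly to $(a_i - a_j) \cdot \theta \geq c$, so $f$ is incentive compatible with these payments and hence implementable-with-payments, inducing $l$.

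For the reverse direction, I would assume $f \in \bar{F}_{\{a_i, a_j\}}$ is implementable-with-payments with witnessing prices $p_i, p_j$ and show its induced allocation hyperplane $l_{f, \{a_i, a_j\}}$ lies in $L_{\{a_i, a_j\}}$, i.e., is parallel to $I_{\{a_i, a_j\}}$. By incentive compatibility (taxation principle), the set of types indifferent at these prices satisfies $a_i \cdot \theta - p_i = a_j \cdot \theta - p_j$, which rearranges to $(a_i - a_j) \cdot \theta = p_i - p_j$; this is precisely a hyperplane with the same normal $(a_i - a_j)$ as $I_{\{a_i, a_j\}}$, hence parallel to it. The separating hyperplane induced by $f$ must coincide with this indifference-at-prices hyperplane, since types strictly preferring $a_i$ receive $a_i$ and types strictly preferring $a_j$ receive $a_j$, placing the separating boundary exactly where the preference flips.

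The main obstacle I anticipate is the careful handling of the boundary and the identification of the separating hyperplane with the indifference-at-prices hyperplane, rather than any deep difficulty. In the reverse direction I must argue that the separating hyperplane from the definition of a separating allocation rule is genuinely the one where preferences switch, which requires ruling out the possibility that $a_i = a_j$ (degenerate, giving a zero normal) and confirming that incentive compatibility forces types strictly on one side to receive the corresponding allocation; the open half-space formulation in the definition of separating rules is what lets me match boundaries cleanly. I would close by noting that this equivalence also yields the parenthetical claim that every implementable-with-payments rule is separating, completing the correspondence between $L_{\{a_i, a_j\}}$ and implementable-with-payments rules.
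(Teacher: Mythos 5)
Your proposal is correct and follows essentially the same route as the paper's proof: both directions rest on the taxation principle for two allocations, with the price difference $c = p_i - p_j$ identifying each implementable-with-payments rule with a translate of the indifference hyperplane $(a_i - a_j)\cdot\theta = c$. Your version is merely more explicit in separating and verifying the two directions (and in flagging the boundary/degeneracy issues) than the paper's terser argument.
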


\begin{proof}
First, note that if $l_{f,\{a_i, a_j\}} = I_{\{a_{i'}, a_{j'}\}}$ and the payment is equal to zero, then the agent has no incentive to misreport her type, i.e., the mechanism $\mathcal{M} = (f, 0)$ is incentive compatible.
Second, we know that, from the taxation principle, an allocation rule $f \in \bar{F}_{\{a_i, a_j \}}$ is truthfully implementable-with-payments in dominant strategies if the agent is charged the same payment every time she receives a given assignment. 
Equivalently,~\footnote{By the previous sentence, we know that there exists a constant $c$ (equal to the payment difference)  such that if $a_i\cdot\theta - c > a_j\cdot\theta$ then the agent receives $a_i$ and vice versa.  This equation exactly corresponds to a translation of the indifference hyperplane by $c$.} the allocation hyperplane $l_{f,\{a_{i}, a_{j}\}}$ of an implementable-with-payments allocation rule $f \in \bar{F}_{\{a_i, a_j \}}$ is parallel to the related indifference hyperplane $I_{\{a_{i}, a_{j}\}}$.
\end{proof}

While there are many allocation hyperplanes in $L_{\{a_i,a_j\}}$, 
the harmless set is entirely determined by one of them, the one identified as the "lowest" in the previous section, which we term the {\em critical allocation hyperplane.}

\begin{definition}
The \emph{critical allocation hyperplane} $l_{f^{\theta},{\{a_i, a_j\}}} \in L_{\{a_i, a_j\}}$ is parallel to the indifference hyperplane $I_{\{a_i, a_j\}} $ and the agent's type belongs to it (i.e., $\theta \in l_{f^{\theta},{\{a_i, a_j\}}} $).
We call a rule $f^{\theta}\in \bar{F}_{\{a_i, a_j\}}$ that induces $l_{f^{\theta},{\{a_i, a_j\}}}$ a critical allocation rule.
\end{definition}

Note that there exist an infinite number of critical allocation rules that induce a critical allocation hyperplane. In the remaining of the paper, the critical allocation rule we consider is the following: if $\theta' \in l_{f^{\theta}, \{a_i,a_j\}}$ and $a_i \cdot \theta' > a_j \cdot \theta'$, then $f^{\theta}(\theta') = a_i$ but $f^{\theta}(\theta) = a_j$.  I.e., $\theta$ gets its less preferred allocation, while all the other types on the critical allocation hyperplane get the more preferred allocation.  This implies that $\theta' \notin H(\theta, \bar{F})$, $\forall \theta' \neq \theta \in l_{f^{\theta}, \{a_i,a_j\}}$.
\begin{lemma}
\label{lem:critical}
$H(\theta, \bar{F}_{\{a_i, a_j \}}) = H (\theta, f^{\theta})$, i.e., the harmless set of types of the set of rules $\bar{F}_{\{a_i, a_j \}}$ is equal to the harmless set of types of allocation rule $f^{\theta} \in \bar{F}_{\{a_i, a_j \}}$ that induces the critical allocation hyperplane $l_{f^{\theta},{\{a_i, a_j \}}} $.
\end{lemma}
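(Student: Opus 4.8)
The plan is to prove the two inclusions separately, the first being immediate and the second carrying all the content. Since $f^{\theta}\in\bar{F}_{\{a_i,a_j\}}$, Observation~\ref{obs:intersection} gives $H(\theta,\bar{F}_{\{a_i,a_j\}}) = \bigcap_{f\in\bar{F}_{\{a_i,a_j\}}} H(\theta,f) \subseteq H(\theta,f^{\theta})$, so only the reverse inclusion $H(\theta,f^{\theta})\subseteq H(\theta,\bar{F}_{\{a_i,a_j\}})$ requires work. Before that I would dispose of the degenerate case $a_i\cdot\theta = a_j\cdot\theta$: there $f(\theta)\cdot\theta$ equals both $a_i\cdot\theta$ and $a_j\cdot\theta$, hence $f(\theta)\cdot\theta \ge f(\hat\theta)\cdot\theta$ for every $\hat\theta$ and every separating $f$, so both sides equal $\Theta$. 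Thus I may assume without loss of generality that $a_i\cdot\theta > a_j\cdot\theta$, i.e.\ $\theta$ strictly prefers $a_i$.

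The second step is to record, for a single implementable-with-payments rule $f$, exactly what $H(\theta,f)$ is. Because $f$ takes only the values $a_i,a_j$, the quantity $f(\hat\theta)\cdot\theta$ is one of the two numbers $a_i\cdot\theta > a_j\cdot\theta$, so the five cases of the informal discussion collapse to two. If $f(\theta)=a_i$ then $f(\theta)\cdot\theta = a_i\cdot\theta$ dominates every $f(\hat\theta)\cdot\theta$ and $H(\theta,f)=\Theta$. If instead $f(\theta)=a_j$ then $f(\theta)\cdot\theta = a_j\cdot\theta$, and $\hat\theta\in H(\theta,f)$ iff $f(\hat\theta)\cdot\theta \le a_j\cdot\theta$, i.e.\ iff $f(\hat\theta)=a_j$. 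The crucial structural input here is Observation~\ref{obs:power}: each allocation hyperplane I intersect over lies in $L_{\{a_i,a_j\}}$ and is therefore parallel to $I_{\{a_i,a_j\}}$. Consequently $f(\hat\theta)=a_j$ is controlled by a single linear functional: writing the hyperplane of $f$ as $(a_i-a_j)\cdot\theta' = c$, a point strictly on the $a_j$-side satisfies $(a_i-a_j)\cdot\hat\theta < c$, while $f(\theta)=a_j$ forces $c \ge (a_i-a_j)\cdot\theta$.

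The final step combines these. Applying the previous paragraph to $f^{\theta}$, whose hyperplane passes through $\theta$ (so $c=(a_i-a_j)\cdot\theta$) and which assigns $\theta$ its less-preferred $a_j$, I obtain the explicit description $H(\theta,f^{\theta}) = \{\hat\theta : (a_i-a_j)\cdot\hat\theta < (a_i-a_j)\cdot\theta\}\cup\{\theta\}$, using that every other point of $l_{f^{\theta},\{a_i,a_j\}}$ receives $a_i$. Now fix any $\hat\theta$ in this set and any implementable $f$; I must show $\hat\theta\in H(\theta,f)$. If $f(\theta)=a_i$ this is automatic since $H(\theta,f)=\Theta$. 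If $f(\theta)=a_j$, then its relative price satisfies $c\ge (a_i-a_j)\cdot\theta$; if $\hat\theta=\theta$ then $f(\hat\theta)=a_j$ trivially, and otherwise $(a_i-a_j)\cdot\hat\theta < (a_i-a_j)\cdot\theta \le c$ places $\hat\theta$ strictly on the $a_j$-side of the hyperplane of $f$, so again $f(\hat\theta)=a_j$ and hence $\hat\theta\in H(\theta,f)$. As $f$ was arbitrary, $\hat\theta$ lies in the intersection, which proves $H(\theta,f^{\theta})\subseteq H(\theta,\bar{F}_{\{a_i,a_j\}})$.

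I expect the main obstacle to be the careful bookkeeping on the boundary hyperplane and under tie-breaking: the harmless set contains $\theta$ yet none of the other indifferent points of the critical hyperplane, and the argument must survive the edge case where a competing rule $f$ also places $\theta$ exactly on its hyperplane ($c = (a_i-a_j)\cdot\theta$). The resolution is that membership only requires $f(\hat\theta)=a_j$, and any $\hat\theta$ strictly below the level of $\theta$ is safely on the $a_j$-side of every such $f$ precisely because all these hyperplanes are parallel by Observation~\ref{obs:power}; this parallelism is exactly what lets the single ``lowest'' critical hyperplane dominate the entire family.
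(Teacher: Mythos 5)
Your proof is correct and follows essentially the same route as the paper's: exploit the parallelism of all allocation hyperplanes in $L_{\{a_i,a_j\}}$ (Observation~\ref{obs:power}), split the rules into those giving $\theta$ its preferred allocation (harmless set $\Theta$) and those giving the other (harmless set the $a_j$-side half-space), and observe that the intersection is governed by the critical hyperplane through $\theta$. The only difference is that you are more explicit than the paper about the boundary bookkeeping --- the inclusion of $\theta$ itself and the tie-breaking of competing rules whose hyperplane also passes through $\theta$ --- which the paper's terse proof glosses over; this is a refinement, not a different argument.
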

\begin{proof}
First notice that, since the allocation hyperplanes in $L_{\{a_i, a_j\}}$ are parallel, the critical allocation hyperplane $l_{f^{\theta}, \{a_i,a_j\}}$ divides the allocation hyperplanes in $L_{\{a_i, a_j\}}$ in two sets, depending on which side of it they lie.  For those on the same side as the indifference hyperplane, the agent is already getting its preferred type so $H(\theta, f) = \Theta$.  For those on the opposite side, the agent would rather report a type yielding her preferred allocation, so $H(\theta, f)$ corresponds to the open half-space containing the indifference hyperplane.  The intersection of all these sets is exactly  $H(\theta, f^{\theta})$
\end{proof}
The previous lemma implies that, given $\theta$ and a set of allocations $A' = \{a_i, a_j \}$, the critical allocation hyperplane $l_{f^{\theta},A'} $ divides the space into two half-spaces, and that if $l_{f^{\theta},A'} \neq I_{A'} $, then the open half-space containing the indifference hyperplane corresponds to the harmless set of types of $\bar{F}_{\{a_i, a_j \}}$, otherwise, if $l_{f^{\theta},A'} = I_{A'} $, then $H(\theta, \bar{F}_{\{a_i, a_j \}}) = H(\theta, f^{\theta}) = \Theta$.
\begin{definition}
Let $\bar{F}^{\theta} = \{ f^{\theta}_{\{a_i, a_j \}} : \{a_i, a_j \} \in A\}$ be the set of critical allocation rules given all pairs $\{a_i, a_j \} \in A$.
\end{definition}
\begin{definition}
Let $L^{\theta} =\{ l_{f^{\theta},{\{a_i, a_j\}}} : \{a_i, a_j\} \in A \}$ denote the set of critical allocation hyperplanes.
\end{definition}

From Observation~\ref{obs:intersection} and Lemma~\ref{lem:critical}, follows Corollary~\ref{cor:final}, which says that to identify the harmless set it suffices to identify the critical allocation hyperplanes.

\begin{corollary}
\label{cor:final}
$H(\theta, \bar{F}) = H(\theta, \bar{F}^{\theta}) = \cap_{f^{\theta} \in \bar{F}^{\theta}} H(\theta, f^{\theta})$.
\end{corollary}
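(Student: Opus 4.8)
The plan is to derive the corollary purely by chaining Observation~\ref{obs:intersection} with Lemma~\ref{lem:critical}; no new geometric content is needed, since the heavy lifting was already done in establishing the lemma. The strategy is to first expand $H(\theta,\bar{F})$ as an intersection over all separating allocation rules, then regroup this intersection according to which allocation pair each rule separates, and finally collapse each per-pair intersection to its single critical rule.

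First I would apply Observation~\ref{obs:intersection} to the set $\bar{F}$ to write $H(\theta,\bar{F}) = \cap_{f \in \bar{F}} H(\theta,f)$. Since $\bar{F} = \cup_{\{a_i,a_j\}\subseteq A} \bar{F}_{\{a_i,a_j\}}$ by definition, a routine set-theoretic identity lets me regroup the intersection indexed over the union into a nested intersection, giving
\[
H(\theta,\bar{F}) = \bigcap_{\{a_i,a_j\}\subseteq A}\ \bigcap_{f \in \bar{F}_{\{a_i,a_j\}}} H(\theta,f) = \bigcap_{\{a_i,a_j\}\subseteq A} H\bigl(\theta,\bar{F}_{\{a_i,a_j\}}\bigr),
\]
where the last equality is again Observation~\ref{obs:intersection}, now applied separately to each family $\bar{F}_{\{a_i,a_j\}}$.

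Next I would invoke Lemma~\ref{lem:critical}, which states that $H(\theta,\bar{F}_{\{a_i,a_j\}}) = H(\theta, f^{\theta}_{\{a_i,a_j\}})$ for every allocation pair. Substituting this into the previous display yields $H(\theta,\bar{F}) = \cap_{\{a_i,a_j\}\subseteq A} H(\theta,f^{\theta}_{\{a_i,a_j\}})$, which by the definition of $\bar{F}^{\theta}$ is precisely $\cap_{f^{\theta}\in\bar{F}^{\theta}} H(\theta,f^{\theta})$. Applying Observation~\ref{obs:intersection} one final time, now to the set $\bar{F}^{\theta}$, identifies this intersection with $H(\theta,\bar{F}^{\theta})$, establishing both equalities in the statement.

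I do not anticipate a genuine obstacle here. The only point requiring any care is the regrouping of the intersection indexed over the union $\cup_{\{a_i,a_j\}}\bar{F}_{\{a_i,a_j\}}$ into a nested intersection, and even that is immediate once one observes that intersecting over a union of index sets coincides with intersecting the per-block intersections. The conceptual work---that each family $\bar{F}_{\{a_i,a_j\}}$ contributes only its critical hyperplane to the harmless set---is entirely absorbed into Lemma~\ref{lem:critical}, so the corollary amounts to bookkeeping that lifts the single-pair result to the full allocation set $A$.
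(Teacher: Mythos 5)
Your proposal is correct and follows exactly the route the paper intends: the paper derives Corollary~\ref{cor:final} directly from Observation~\ref{obs:intersection} and Lemma~\ref{lem:critical}, which is precisely the chain of regrouping and substitution you spell out. Nothing is missing; you have simply made explicit the bookkeeping the paper leaves implicit.
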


Figure~\ref{fig:det-mech-two-plus-zero}(a) shows an example of a critical allocation hyperplane, indifference hyperplane, and harmless set of types for a set of implementable-with-payments allocation rules for the case with two allocations.

This example also provides a clear geometric explanation for the phenomenon observed in previous work that ``symmetric'' verifications (which tend to take the form of a constraint that misreports must be local to the true type) do not tend to help while ``asymmetric'' ones do~\cite{fotakis2015b}.  Because $\theta$ is on the critical allocation hyperplane, there are arbitrarily close misreports which can lead to a benefit for some allocation rules, so restricting misreports to be close to the true type does not help.  In contrast, an asymmetric verification which rules out the entire half space above the critical allocation hyperplane is very useful.

\section{More Than Two Allocations}
\label{sec::more_than_two}

Now that we understand how to identify harmless sets of types of implementable-with-payments allocation rules with two allocations, we can extend our analysis to cases with more than two allocations.  The key observation is that if a type $\theta'$ is not harmless then there exists an allocation rule $f$ and choice of $a_1$ and $a_2$ such that $f(\theta') = a_1$ while $f(\theta) = a_2$ but $\theta \cdot a_1 > \theta \cdot a_2$.  Since only these two allocations are relevant, we can actually find an implementable-with-payments allocation rule  for which $\theta'$ is not harmless that only allocates $a_1$ and $a_2$, thus reducing to the case we have already analyzed.  To identify the harmless set when there are more than two allocations, we thus intersect the harmless sets resulting from each pair of allocations.

\begin{theorem}
\label{thm:multipleAllocations}
Let $F$ be the set of implementable-with-payments allocation rules.  $H(\theta, F) = H(\theta, \bar{F}^{\theta})$. 
\end{theorem}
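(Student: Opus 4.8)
The plan is to establish the two inclusions $H(\theta, F) \subseteq H(\theta, \bar{F}^{\theta})$ and $H(\theta, \bar{F}^{\theta}) \subseteq H(\theta, F)$ separately. The first is essentially bookkeeping, while the second contains the real content and is exactly the ``key observation'' stated just before the theorem.

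For the first (easy) inclusion, I would note that every critical allocation rule $f^{\theta}_{\{a_i,a_j\}} \in \bar{F}^{\theta}$ is itself implementable-with-payments, hence a member of $F$. Indeed, $f^{\theta}_{\{a_i,a_j\}}$ is a separating rule whose induced allocation hyperplane is the critical hyperplane, which by definition is parallel to $I_{\{a_i,a_j\}}$; Observation~\ref{obs:power} then guarantees such a hyperplane is induced by an implementable-with-payments rule. Thus $\bar{F}^{\theta} \subseteq F$, and since the harmless set of a family of rules is the intersection of the individual harmless sets (Observation~\ref{obs:intersection}), intersecting over the larger family $F$ can only shrink the set: $H(\theta,F) = \cap_{f \in F} H(\theta,f) \subseteq \cap_{f \in \bar{F}^{\theta}} H(\theta,f) = H(\theta,\bar{F}^{\theta})$.

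The second inclusion I would prove by contrapositive. Suppose $\hat{\theta} \notin H(\theta,F)$, so there is an implementable-with-payments rule $f$ with $f(\hat{\theta})\cdot\theta > f(\theta)\cdot\theta$. Set $a_1 = f(\hat{\theta})$ and $a_2 = f(\theta)$, so $a_1\cdot\theta > a_2\cdot\theta$ and in particular $a_1 \neq a_2$ and $\hat\theta \neq \theta$. Only these two allocations are relevant, so I reduce to the pair $\{a_1,a_2\}$ and aim to show $\hat{\theta}\notin H(\theta, f^{\theta}_{\{a_1,a_2\}})$, which suffices because $f^{\theta}_{\{a_1,a_2\}}\in\bar{F}^{\theta}$. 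The engine is the pair of incentive-compatibility constraints witnessing that $f$ is implementable with some payment rule $p$: comparing each type's truthful report against a deviation to the other type gives $a_1\cdot\hat{\theta} - p(\hat{\theta}) \ge a_2\cdot\hat{\theta} - p(\theta)$ and $a_2\cdot\theta - p(\theta) \ge a_1\cdot\theta - p(\hat{\theta})$. Writing $c = p(\hat{\theta}) - p(\theta)$ for the relative price, these rearrange to $(a_1-a_2)\cdot\hat{\theta} \ge c \ge (a_1-a_2)\cdot\theta$, which places $\hat{\theta}$ weakly on the $a_1$-preferring side of the critical allocation hyperplane for $\{a_1,a_2\}$, i.e., the translate of $I_{\{a_1,a_2\}}$ through $\theta$.

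Finally I would read off the allocations of the critical rule. Since $a_1\cdot\theta > a_2\cdot\theta$, type $\theta$ prefers $a_1$, so the chosen tie-breaking convention gives $f^{\theta}_{\{a_1,a_2\}}(\theta) = a_2$. By the sandwich above, $\hat{\theta}$ either lies strictly on the $a_1$-side, where $f^{\theta}(\hat{\theta}) = a_1$, or exactly on the critical hyperplane, where the same convention (every point other than $\theta$ on the hyperplane receives the preferred allocation $a_1$) again yields $f^{\theta}(\hat{\theta}) = a_1$. Either way $f^{\theta}(\hat{\theta})\cdot\theta = a_1\cdot\theta > a_2\cdot\theta = f^{\theta}(\theta)\cdot\theta$, so $\hat{\theta}\notin H(\theta,f^{\theta}_{\{a_1,a_2\}})$ and hence $\hat{\theta}\notin H(\theta,\bar{F}^{\theta})$. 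I expect the main obstacle to be precisely this boundary case: confirming that the tie-breaking rule built into the definition of the critical allocation rule correctly classifies a misreport $\hat\theta$ landing on the critical hyperplane, together with keeping the direction of the two IC inequalities straight so that $c$ sandwiches the two inner products in the order needed.
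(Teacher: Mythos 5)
Your proof is correct, and it follows the paper's announced strategy --- reduce a non-harmless report to the two-allocation pair $\{f(\theta), f(\hat\theta)\}$ --- but it completes the reduction by a different route. The paper's proof constructs a restricted rule $f'$ by deleting every allocation except $f(\theta)$ and $f(\hat\theta)$ from the taxation-principle menu of $f$, observes that $f'$ is implementable-with-payments with $f'(\theta)=f(\theta)$ and $f'(\hat\theta)=f(\hat\theta)$, concludes $\hat\theta\notin H(\theta,\bar{F})$, and then invokes Corollary~\ref{cor:final} (hence Lemma~\ref{lem:critical}) to pass from $\bar{F}$ to $\bar{F}^{\theta}$. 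You instead bypass both the intermediate rule and Corollary~\ref{cor:final}: you extract the two incentive constraints of $(f,p)$ to get the sandwich $(a_1-a_2)\cdot\hat{\theta} \ge c \ge (a_1-a_2)\cdot\theta$ and read off directly that the critical rule $f^{\theta}_{\{a_1,a_2\}}$ already witnesses non-harmlessness. This buys a self-contained argument that makes explicit the boundary case the paper leaves implicit --- a misreport $\hat{\theta}$ landing exactly on the critical allocation hyperplane, which is handled correctly by the paper's stated tie-breaking convention that every point of that hyperplane other than $\theta$ receives the preferred allocation --- at the cost of redoing a computation that Lemma~\ref{lem:critical} already packages. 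The easy inclusion is handled identically in both proofs via $\bar{F}^{\theta}\subseteq F$ and Observation~\ref{obs:intersection}.
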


\begin{proof}
Because $\bar{F}^{\theta} \subset F$, $H(\theta, F) \subset H(\theta, \bar{F}^{\theta})$. 
For the other direction, let $\theta'$ be given such that $\theta' \not\in H(\theta, F)$.
By Definition~\ref{def:harmless}, $\theta' \notin  H(\theta, F)$ if and only if there exists an allocation rule $f \in F$ such that $\theta \cdot f(\theta') > \theta \cdot f(\theta)$.
By the taxation principle, we can represent $f$ by a list of allocations and the price for each allocation.  Construct $f'$ from $f$ by eliminating all allocations except $f(\theta)$ and $f(\theta')$ from this list.  Then $f'$ is implementable-with-payments (those from the list), $f'(\theta) = f(\theta)$, $f'(\theta') = f(\theta')$, and $f' \in \bar{F}$.  Thus $\theta' \not\in H(\theta, \bar{F})$.  By Corollary~\ref{cor:final}, $\theta' \not\in H(\theta, \bar{F}^{\theta})$.
\end{proof}

\section{Verification and Randomization}

In this section, we examine the implications of allowing randomization for implementing mechanisms using partial verification.  We show that there is a significant harmless set shared by all deterministic mechanisms.  Since universally truthful mechanisms are just distributions over deterministic mechanisms, this turns out to be true for them as well.  
However, the harmless set shared by all truthful in expectation mechanisms is quite limited.

\subsection{Deterministic mechanisms}

We now study how to identify the harmless set of types for all truthful deterministic mechanisms.  The result naturally follows from Theorem~\ref{thm:multipleAllocations}. In particular, the harmless set is the intersection of the harmless sets of all deterministic mechanisms with two allocations, which in turn corresponds to the intersections of the harmless sets generated by the relevant critical allocation hyperplanes.
\begin{theorem}
\label{thm:deterministic}
Let $F$ be the set of deterministic implementable-with-payments allocation rules using allocations in $A$. Then, $H(\theta, F)$ corresponds to the intersection of the half-spaces generated by all $l_{f^{\theta}, \{a_i, a_j\}} \in L^{\theta}$ containing the origin.
\end{theorem}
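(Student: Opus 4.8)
The plan is to reuse the pair-reduction already proved for general allocations and then pin down, using the special structure of deterministic (point-mass) allocations, which side of each critical hyperplane the harmless set falls on. Since here $A$ is the set of point-mass allocations, the set $F$ of implementable-with-payments rules using $A$ is precisely the set of deterministic implementable rules, so Theorem~\ref{thm:multipleAllocations} applies directly and gives $H(\theta,F)=H(\theta,\bar F^{\theta})$; by Observation~\ref{obs:intersection} (equivalently Corollary~\ref{cor:final}) this equals $\bigcap_{f^{\theta}\in\bar F^{\theta}} H(\theta,f^{\theta})$. I would note in passing that the critical rules in $\bar F^{\theta}$ are built from pairs of point masses and are therefore themselves deterministic, so the reduction stays inside the deterministic world and no hypothesis is abused.

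Next I would make each factor explicit via Lemma~\ref{lem:critical}: for a pair $\{a_i,a_j\}$ on which $\theta$ is not indifferent, $H(\theta,f^{\theta})$ is the open half-space bounded by the critical hyperplane $l_{f^{\theta},\{a_i,a_j\}}$ that contains the indifference hyperplane $I_{\{a_i,a_j\}}$. The crux is to recognize this half-space as the one containing the origin. For point-mass allocations, $a_i\cdot\theta'=\theta'_{s_i}$ and $a_j\cdot\theta'=\theta'_{s_j}$, so $I_{\{a_i,a_j\}}=\{\theta':(a_i-a_j)\cdot\theta'=0\}$ is cut out by a homogeneous equation and hence passes through the origin. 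Because $l_{f^{\theta},\{a_i,a_j\}}$ is parallel to $I_{\{a_i,a_j\}}$ but (in the non-indifferent case) distinct from it, it cannot pass through the origin, so the origin lies strictly on the same side as $I_{\{a_i,a_j\}}$. Thus ``the half-space containing the indifference hyperplane'' and ``the half-space containing the origin'' are the same set, and intersecting over all pairs $\{a_i,a_j\}\subseteq A$ yields exactly the claimed intersection.

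The part needing the most care, rather than any computation, is the boundary and degenerate bookkeeping. For a pair on which $\theta$ is indifferent (i.e.\ $\theta_{s_i}=\theta_{s_j}$) the critical hyperplane coincides with $I_{\{a_i,a_j\}}$, passes through the origin, and gives $H(\theta,f^{\theta})=\Theta$; I would observe that such a pair imposes no constraint and so can be dropped from the intersection without changing it, which is what lets the statement speak cleanly of ``the half-space containing the origin.'' I would also remark once that $\theta$ lies on every critical hyperplane and is trivially harmless, while the chosen critical rule pushes the remaining boundary points out of the harmless set, so each factor is the open origin-side half-space together with the single point $\theta$; this keeps the open/closed phrasing consistent across the whole intersection.
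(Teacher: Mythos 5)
Your proposal is correct and follows essentially the same route as the paper's proof: reduce to critical allocation rules via Theorem~\ref{thm:multipleAllocations} and Corollary~\ref{cor:final}, then identify each factor as the origin-side open half-space of the corresponding critical hyperplane. Your explicit justification that the origin-side and indifference-hyperplane-side coincide (because $I_{\{a_i,a_j\}}$ is cut out by a homogeneous equation for point-mass allocations), and your handling of the indifferent pairs and of the boundary point $\theta$, are details the paper delegates to Example~\ref{ex:2determistic} and Lemma~\ref{lem:critical}, so your write-up is if anything slightly more careful.
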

\begin{proof}
First note that since the mechanisms here considered are deterministic there exists one allocation for each possible assignment, i.e., $A = \{a_1, a_2, \dots, a_{|S|} \}$. For every pair of allocations $\{a_i, a_j\} \in A$  the harmless set of types $H(\theta, \bar{F}_{\{a_i, a_j\} })$  can be computed as shown in Example~\ref{ex:2determistic}, i.e., $H(\theta, \bar{F}_{\{a_i, a_j \}}) = H(\theta, f^{\theta})$ where $H(\theta, f^{\theta})$ corresponds to the open half-spaces generated by $l_{f^{\theta}, \{a_i, a_j\}} \in L^{\theta}$ that contains the origin.
Thus, due to Corollary~\ref{cor:final}, to compute $H(\theta, \bar{F})$, we need to consider only the allocation rules $f^{\theta} \in \bar{F}^{\theta} $ where $\bar{F}^{\theta}$ is the set of the critical allocation hyperplanes, one for each $\{a_i, a_j\} \in A$.
Given this, Observation~\ref{obs:intersection}, and Theorem~\ref{thm:multipleAllocations}, we can conclude that $H(\theta, F) = H(\theta, \bar{F^{\theta}}) = \cap_{f^{\theta} \in F^{\theta}} H(\theta, f^{\theta})$.
\end{proof}

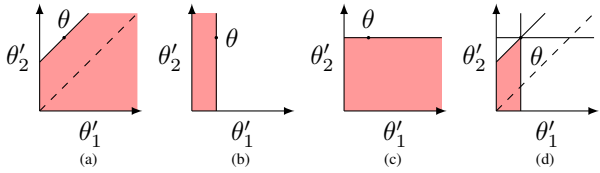
\begin{figure}[t]
  \begin{tikzpicture}[scale=0.65]
    \fill[harmless] (0,0) -- (0,1) -- (1,2) -- (2,2) -- (2,0) -- cycle;
    \axes
    \draw[indifference] (0,0) -- (2,2);
    \draw[critical] (0,1) -- (1,2);
    \putpoint{0.5,1.5}{above}{\theta}
    \node[] at (1, -1) {\tiny{(a)}};
  \end{tikzpicture}
  \begin{tikzpicture}[scale=0.65]
    \fill[harmless] (0,0) -- (0.5,0) -- (0.5,2) -- (0,2) -- cycle;
    \axes
    \draw[critical] (0.5,0) -- (0.5,2);
    \putpoint{0.5,1.5}{right}{\theta}
    \node[] at (1, -1) {\tiny{(b)}};
  \end{tikzpicture}
  \begin{tikzpicture}[scale=0.65]
    \fill[harmless] (0,0) -- (0,1.5) -- (2,1.5) -- (2,0) -- cycle;
    \axes
    \draw[critical] (2,1.5) -- (0,1.5);
    \putpoint{0.5,1.5}{above}{\theta}
    \node[] at (1, -1) {\tiny{(c)}};
  \end{tikzpicture}
%
  \begin{tikzpicture}[scale=0.65]
    \fill[harmless] (0,0) -- (0,1) -- (0.5,1.5) -- (0.5,0) -- cycle;
    \axes
    \draw[indifference] (0,0) -- (2,2);
    \draw[critical] (0,1) -- (1,2);
    \draw[critical] (0.5,0) -- (0.5,2);
    \draw[critical] (2,1.5) -- (0,1.5);
    \putpoint{0.5,1.5}{below right}{\theta}
    \node[] at (1, -1) {\tiny{(d)}};
  \end{tikzpicture}
\caption{Harmless sets for individual pairs of allocations for types with $\theta_\emptyset' = 0$ and their intersection (right graph).}
\label{fig:det-mech-two-plus-zero}
\end{figure}
We provide an example, shown in Figure~\ref{fig:det-mech-two-plus-zero}, to illustrate how to compute the harmless set of types for deterministic mechanisms with more than two allocations.
\begin{example}
\label{example:deterministic}
Consider a case with three assignments, one of which is null with no value.  Thus, we have assignments $S = \{\emptyset, 1, 2\}$ and allocations $A = \{a_1, a_2, a_3 \}$. Without loss of generality, assume that $a_1(\emptyset) = 1$, $a_2(1) = 1$, and $a_3(2)= 1$. Consequently, $\bar{F}^{\theta} = \{f^{\theta}_{\{a_1, a_2\}}, f^{\theta}_{\{a_1, a_3\}}, f^{\theta}_{\{a_2, a_3\}}\}$. The harmless sets for these allocation rules are shown in Figures~\ref{fig:det-mech-two-plus-zero} (b), (c), and (a) respectively.  Since $H(\theta, \bar{F^{\theta}})$, given by their intersection, is shown in Figure~\ref{fig:det-mech-two-plus-zero}(d).
\end{example}
This example also illustrates a key point about implementing the verifications found by our method: despite the infinite type space and infinite set of allocation rules, we can express the properties we need to verify in terms of a finite number of halfspace constraints, which gives reason to believe they may be verifiable in practical situations.

\subsection{Universally truthful mechanisms}

As previously discussed, the harmless set for all universally truthful mechanisms is the same as for all deterministic mechanisms.  We observe that every universally truthful mechanism can be represented as a distribution over truthful deterministic mechanisms, and every deterministic mechanism is a universally truthful mechanism that chooses the specific deterministic mechanism with probability 1.

\begin{theorem}
\label{thm:universally_truthful}
The harmless set of types $H(\theta, F)$ of all single agent universally truthful mechanisms 
 is equal to the harmless set of types of all single agent truthful deterministic mechanism. 
\end{theorem}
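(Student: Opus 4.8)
The plan is to prove the two set inclusions separately, where one direction is immediate from set monotonicity and the other rests on the linearity of the utility in the allocation. Throughout, write $F_U$ for the set of allocation rules of single-agent universally truthful mechanisms and $F_D$ for the set of deterministic implementable-with-payments allocation rules; the claim is $H(\theta, F_U) = H(\theta, F_D)$.

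For the inclusion $H(\theta, F_U) \subseteq H(\theta, F_D)$, I would observe that every deterministic incentive compatible mechanism is itself universally truthful, namely the degenerate distribution placing probability $1$ on that single mechanism. Hence $F_D \subseteq F_U$. Since by Observation~\ref{obs:intersection} the harmless set of a family is the intersection of the individual harmless sets, $H(\theta, \cdot)$ is anti-monotone in the family of rules: intersecting over the larger collection $F_U$ can only shrink the set. This yields $H(\theta, F_U) = \cap_{f \in F_U} H(\theta, f) \subseteq \cap_{f \in F_D} H(\theta, f) = H(\theta, F_D)$.

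For the reverse inclusion $H(\theta, F_D) \subseteq H(\theta, F_U)$, which is the substantive direction, I would take an arbitrary $\theta' \in H(\theta, F_D)$ and an arbitrary universally truthful rule $f \in F_U$, and show $\theta' \in H(\theta, f)$. By the definition of a universally truthful mechanism, $f = \sum_{\mathcal{M}' \in S_{\mathcal{M}}} \alpha_{\mathcal{M}'} f_{\mathcal{M}'}$, where each $f_{\mathcal{M}'}$ is the allocation rule of an incentive compatible deterministic mechanism and hence lies in $F_D$. Because the agent's value $f(\hat\theta) \cdot \theta$ is linear in the allocation, I can distribute the dot product over the convex combination to obtain $f(\theta) \cdot \theta = \sum_{\mathcal{M}'} \alpha_{\mathcal{M}'}\, f_{\mathcal{M}'}(\theta) \cdot \theta$ and likewise $f(\theta') \cdot \theta = \sum_{\mathcal{M}'} \alpha_{\mathcal{M}'}\, f_{\mathcal{M}'}(\theta') \cdot \theta$. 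Since $\theta' \in H(\theta, F_D)$, Definition~\ref{def:harmless} gives $f_{\mathcal{M}'}(\theta) \cdot \theta \geq f_{\mathcal{M}'}(\theta') \cdot \theta$ for every $\mathcal{M}'$ in the support. As the weights $\alpha_{\mathcal{M}'}$ are nonnegative, the termwise inequalities combine to $f(\theta) \cdot \theta \geq f(\theta') \cdot \theta$, so $\theta' \in H(\theta, f)$. Since $f \in F_U$ was arbitrary, $\theta' \in H(\theta, F_U)$.

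The only point requiring care, and the closest thing to an obstacle, is the bookkeeping in the second direction: one must confirm that each deterministic rule $f_{\mathcal{M}'}$ appearing in the decomposition genuinely belongs to the family $F_D$ over which $H(\theta, F_D)$ is intersected. This holds precisely because the support of a universally truthful mechanism consists of incentive compatible deterministic mechanisms, whose allocation rules are by definition implementable-with-payments. Once that is noted, the argument is purely the linearity-of-expectation step above, with no geometric input beyond what was already developed. Combining the two inclusions gives $H(\theta, F_U) = H(\theta, F_D)$, as desired.
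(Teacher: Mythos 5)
Your proof is correct and follows essentially the same route as the paper: the inclusion $H(\theta,F_U)\subseteq H(\theta,F_D)$ comes from viewing each deterministic mechanism as a degenerate universally truthful one, and the reverse inclusion comes from relating the harmless set of a universally truthful rule to the harmless sets of the deterministic rules in its support. If anything, yours is the more careful rendering: the paper asserts that the harmless set of a universally truthful mechanism \emph{equals} the intersection over its support (only the containment $H(\theta,f)\supseteq\cap_{\mathcal{M}'}H(\theta,f_{\mathcal{M}'})$ is actually needed, and that is exactly what your explicit linearity-of-expectation step establishes).
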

\begin{proof}
By definition, in order to guarantee that a mechanism $\mathcal{M}$ is universally truthful, we need to guarantee that every deterministic mechanism $\mathcal{M'} \in S_{\mathcal{M}} $  is incentive compatible. This implies that the harmless set of a universally truthful mechanism is equal to the intersection of the harmless set of all the deterministic mechanisms in its support. 

Now note that every deterministic mechanism $\mathcal{M}'$ is a universally truthful mechanism that randomizes over $\mathcal{M}'$ with probability equal to $1$. 
Thus, in order to compute the harmless set of types of all mechanism $\mathcal{M}$, we need to compute the intersection of the harmless sets of types of all the deterministic mechanisms.
\end{proof}

\subsection{A useful lemma}

Before turning to our characterization for truthful in expectation mechanisms, we prove a more general lemma that considers cases with various sets of allocations.  It is applicable, for example, when modeling restrictions such as the existence of a null allocation for which agents are known to have value 0 (common in auction settings).

The lemma works in the case where the set of possible allocations is rich enough that differences between possible allocations form a linear space.  In particular, we cannot have a finite set of allocations, as we do in the deterministic setting.  The lemma states that in this rich setting, we can characterize the harmless set of types as those which can be expressed as a scaling down of the true type, plus some vector which is perpendicular to the space of allocation differences.

The intuition for the lemma is given in Figure~\ref{fig:TIE}.  The first few plots illustrate why, when the vector space of allocation differences is the entire space, the harmless set of types are only those which are scaled-down versions of the true type: for all others we can find a critical allocation hyperplane for which they are on the wrong side. Specifically, it is always possible to identify a hyperplane that contains $\theta$ for which the origin belongs to one half space and the type $\theta'$ belongs to the other half space, i.e., $\theta'$ is not harmless, unless $\theta'$ belongs to the segment that connects the origin to $\theta$.

When this vector space becomes smaller, however, we add ``unenforceable'' directions in the type space, as types that differ only by a vector perpendicular to all allocation differences cannot be separated by a critical allocation hyperplane.  These unenforceable directions in turn expand the set of harmless types.  As an example, consider the a setting with two assignments and an agent who is indifferent between them.  Rather than the harmless set being just types between the origin and $\theta$, it is actually the entire space because the agent is indifferent among all possible allocations.  More generally, this phenomenon occurs any time an agent's value for two assignments is tied, even if there are others over which she has a strict preference.

\begin{lemma}
  \label{lem:VerificationLocal-1}
  Let $A\subseteq \reals^m$ be a set of allocations and let type $\theta$ be given.  Define $A^2_\theta = \{(a,a') : a,a'\in A,\, (a-a')\cdot\theta \neq 0\}$ to be the pairs of allocations that $\theta$ is not indifferent between, and $D_\theta = \{\lambda(a-a') : (a,a') \in A^2_\theta,\, \lambda\in\reals\}$ to be the set of scaled differences between such pairs of allocations.  If $D_\theta$ is a linear subspace, then $H(\theta,F) = \{\lambda \theta + d^\perp : \lambda \leq 1, d^\perp \in D_\theta^\perp\} \cap \Theta$, where $D_\theta^\perp = \{v \in \reals^m : \forall d\in D_\theta,\, d\cdot v = 0\}$.  That is, the harmless types are those which are ``smaller than'' $\theta$, modulo directions not captured by $D_\theta$. (Proof in Appendix).
\end{lemma}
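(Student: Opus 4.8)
The plan is to prove the two set-inclusions separately, relying on Theorem~\ref{thm:multipleAllocations} (so that $H(\theta,F) = H(\theta,\bar F^\theta)$, reducing everything to critical allocation hyperplanes arising from pairs of allocations) together with the geometric description of each individual harmless half-space from Lemma~\ref{lem:critical}. Throughout, write $H := \{\lambda\theta + d^\perp : \lambda \leq 1,\, d^\perp \in D_\theta^\perp\} \cap \Theta$ for the claimed set; the goal is $H(\theta,F) = H$.

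For the inclusion $H \subseteq H(\theta,F)$, I would take an arbitrary $\theta' = \lambda\theta + d^\perp$ with $\lambda \leq 1$ and $d^\perp \in D_\theta^\perp$, and show it is harmless for every pair $\{a_i,a_j\}$. Fix such a pair. If $\theta$ is indifferent between $a_i$ and $a_j$, then $(a_i-a_j)\cdot\theta = 0$, and by Case~3 of the informal analysis the harmless set for that pair is all of $\Theta$, so there is nothing to check. Otherwise $(a_i,a_j) \in A^2_\theta$, so $(a_i - a_j) \in D_\theta$; by Lemma~\ref{lem:critical} the harmless half-space for this pair is the side of the critical allocation hyperplane through $\theta$ containing the origin. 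The key computation is to evaluate $(a_i - a_j)\cdot\theta'$: since $d^\perp \perp D_\theta$ kills the $d^\perp$ term and $\lambda \leq 1$ scales down the $\theta$ term, one gets $(a_i - a_j)\cdot\theta' = \lambda\,(a_i - a_j)\cdot\theta$, which lies weakly between $0$ and $(a_i-a_j)\cdot\theta$ (on the origin side). This places $\theta'$ in the correct half-space, so $\theta' \in H(\theta,f^\theta_{\{a_i,a_j\}})$. Intersecting over all pairs via Corollary~\ref{cor:final} gives $\theta' \in H(\theta,\bar F^\theta) = H(\theta,F)$.

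For the reverse inclusion $H(\theta,F) \subseteq H$, I would argue contrapositively: take $\theta' \notin H$ and exhibit a pair of allocations witnessing that $\theta'$ is not harmless. Decompose $\theta'$ uniquely as its $D_\theta$-component plus its $D_\theta^\perp$-component (this is where the hypothesis that $D_\theta$ is a \emph{linear subspace} is essential, since it guarantees the orthogonal decomposition $\reals^m = D_\theta \oplus D_\theta^\perp$ is well defined). Failing to be in $H$ means the $D_\theta$-component of $\theta'$ is not of the form $\lambda\theta$ with $\lambda\le 1$ — either it is a multiple $\lambda\theta$ with $\lambda > 1$, or it is not a scalar multiple of the $D_\theta$-component of $\theta$ at all. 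In either case I must produce a single pair $(a,a')\in A^2_\theta$ whose difference vector $a - a'$ separates $\theta'$ from the origin relative to the hyperplane through $\theta$, i.e.\ with $(a-a')\cdot\theta'$ strictly exceeding $(a-a')\cdot\theta$ on the appropriate side. Richness of $D_\theta$ (being a full linear subspace spanned by allocation differences) is what lets me realize the needed separating direction as an actual difference of two allocations.

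The main obstacle is precisely this last step: converting the abstract geometric failure ``the $D_\theta$-component of $\theta'$ is not a scaled-down copy of $\theta$'' into an explicit separating pair of allocations. The subtlety is that $D_\theta$ is generated by differences $a-a'$, and while every such difference lies in $D_\theta$, not every vector of $D_\theta$ need be a single allocation difference — only scalar multiples and spans are guaranteed. I expect to handle this by first finding any vector $v \in D_\theta$ with $v\cdot\theta' > v\cdot\theta$ (available because the $D_\theta$-components differ in the forbidden way, and the constraint $\lambda \le 1$ translates into a linear inequality that $\theta'$ violates), then using the fact that $v$ is a linear combination of genuine allocation differences to pass to one individual difference $a-a'$ that already achieves strict violation, and finally invoking the specific critical rule $f^\theta$ (which assigns $\theta$ its less-preferred allocation) so that $\theta'$ strictly benefits by reporting toward $\theta$'s more-preferred side. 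This yields $\theta' \notin H(\theta,\bar F^\theta) = H(\theta,F)$, completing the contrapositive.
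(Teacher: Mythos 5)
Your sufficiency direction and the collinear case $\lambda>1$ of necessity are fine and essentially match the paper's argument. The genuine gap is in the remaining (generic) case of necessity, where the $D_\theta$-component $d_{\theta'}$ of $\theta'$ is not a scalar multiple of the $D_\theta$-component $d_\theta$ of $\theta$. To exhibit a witnessing rule you need a direction $d$ realizable as $\lambda(a'-a)$ with $(a,a')\in A^2_\theta$ satisfying \emph{two} conditions simultaneously: $d\cdot\theta>0$, so that $\theta$ strictly prefers $a'$ and the critical rule through $\theta$ hands $\theta$ the worse allocation, and $d\cdot\theta'>d\cdot\theta$, so that $\theta'$ lands strictly on the $a'$ side. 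Your plan only secures the second. Worse, the natural separating direction in this case --- the component $\bar d$ of $d_{\theta'}$ orthogonal to $d_\theta$ --- satisfies $\bar d\cdot\theta = \bar d\cdot d_\theta = 0$, so the corresponding pair is one that $\theta$ is \emph{indifferent} between: it is not in $A^2_\theta$ and certifies nothing. The paper's missing idea is a perturbation: take $d=\bar d+\epsilon d_\theta$ for small $\epsilon>0$, which gives $d\cdot\theta=\epsilon\|d_\theta\|^2>0$ while $d\cdot\theta'-d\cdot\theta\to\|\bar d\|^2>0$ as $\epsilon\to 0$. Your step of ``passing to one individual difference that already achieves strict violation'' does not recover this: picking a term of a linear combination with $v\cdot(\theta'-\theta)>0$ gives no control over the sign of $v\cdot\theta$, and flipping the sign of $v$ flips both inequalities, so you cannot fix one without breaking the other.

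Separately, the obstacle you identify as the main one is not actually present. $D_\theta$ is defined as the set of \emph{scaled single} differences $\lambda(a-a')$, not as their span, and the hypothesis that this set is a linear subspace is precisely what guarantees that any vector you build inside it (such as $\bar d+\epsilon d_\theta$) is again of the form $\lambda(a'-a)$ for an actual pair in $A^2_\theta$. So no ``decompose into a combination and pick a good term'' step is needed --- which is fortunate, since as noted above that step would not deliver the correctly oriented pair.
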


\newcommand{\dthp}{d_{\theta'}}
\newcommand{\dth}{d_{\theta}}
\begin{proof}
  First, we show sufficiency.  Let $\theta' = \lambda \theta + d^\perp$ for $\lambda \leq 1$, and let $d = \alpha(a' - a) \in D_\theta$ for $\alpha > 0$.  Without loss of generality, let $\theta$ prefer $a'$ to $a$, so that $\theta\cdot d > 0$.  Now suppose for contradiction that $f$ is some allocation rule such that $f(\theta) = a$ but $f(\theta') = a'$, and consider the allocation hyperplane $\ell$ between the $a$ cell and $a'$ cell for $f$.  By Observation~\ref{obs:power}, $d$ is normal to this boundary, oriented toward the $a'$ cell.  For $\theta$ to be in the $a$ cell and $\theta'$ in the $a'$ cell, we must therefore have $(\theta'-\theta)\cdot d > 0$.  But $\theta'-\theta = (1-\lambda)\theta + d^\perp$, and thus $(\theta'-\theta)\cdot d = (1-\lambda)\theta\cdot d + d^\perp\cdot d$.  Now note that by assumption $\theta\cdot d < 0$ and $1-\lambda \geq 0$, and by definition $d^\perp \cdot d = 0$, so in fact $(\theta'-\theta)\cdot d \leq 0$, which is a contradiction.  Thus every such $\theta'$ is harmless.

  For necessity, first suppose $\theta' = \lambda \theta + \dthp^\perp$ for $\lambda > 1$.  By definition of $D_\theta$, we have some $d = \alpha(a'-a) \in D_\theta$ for which $\theta\cdot d\neq 0$; without loss of generality we take $\theta\cdot d>0$.  Now consider the allocation rule $f(\theta'') = a$ if $\theta''\cdot d \leq \theta\cdot d$ and $a'$ otherwise.  (One can check that $f$ is implementable-with-payments, as the cell boundary between $a$ and $a'$ is perpendicular to $d$ by construction, and correctly oriented.\footnote{While this construction may appear to leverage tie breaking in $f$, it would hold just as well defining $f(\theta'') = a$ if $\theta''\cdot d < z(1+\lambda)/2$ and $a'$ otherwise.  Similarly, the remaining arguments in the proof need not depend on how allocation rules break ties.})  Now we have $f(\theta) = a$ by construction, and $f(\theta)\cdot\theta = a\cdot\theta < a'\cdot \theta = f(\theta')\cdot\theta$, implying that $\theta'$ is not harmless.

  Now project $\theta$ and $\theta'$ onto $D_\theta$ by writing $\theta = \dth + \dth^\perp$ and $\theta' = \dthp + \dthp^\perp$ for $\dth,\dthp \in D_\theta$, $\dth^\perp,\dthp^\perp \in D_\theta^\perp$.
  From the above two cases, we may assume that $\dthp \neq \lambda\dth$ for any $\lambda\in\reals$.  Thus, we may further project $\dthp$ onto $\dth$, writing $\dthp = \alpha\dth + \bar d$ for a nonzero orthogonal direction $\bar d$: $\bar d\neq 0$, $\bar d \cdot \dth = 0$, and $\alpha\in\reals$.  Finally, take $d = \bar d + \epsilon\dth$ for some $\epsilon>0$ sufficiently small (to be determined).

  As $D_\theta$ is a vector space, $d\in D_\theta$, so we may write $d = \lambda (a'-a)$ for some $\lambda>0$ and some pair of allocations $(a,a')\in A^2_\theta$ for which $\theta$ is not indifferent.  Indeed, we have $d\cdot\theta = (\dth^\perp + \bar d + \epsilon\dth)\cdot\dth = \epsilon\|\dth\|^2 > 0$, so $\theta$ prefers $a'$ to $a$.  Now let $f$ be the allocation function defined by $f(\theta'') = a$ if $d\cdot\theta'' \leq d\cdot\theta$ and $a'$ otherwise.  Clearly $f(\theta) = a$, so to show that $\theta'$ is not harmless, it suffices to show $f(\theta') = a'$.  This follows from a simple calculation: $d\cdot\theta' = \bar d\cdot\theta' + \epsilon\dth\cdot\theta' = \bar d \cdot (\alpha\dth + \bar d \cdot \dthp^\perp) + \epsilon \dth\cdot\theta' = \|\bar d\|^2 + \epsilon \dth\cdot\theta' > 0$ for sufficiently small $\epsilon$.
\end{proof}

\subsection{Truthful in expectation mechanisms}

 \begin{figure}
 \hspace{-0.5cm}
   \centering
   \begin{tikzpicture}[scale=0.6]
    \fill[harmless] ($(1,1.6)+0.8*(-0.8,0.5)$) -- ($(1,1.6)-1.2*(-0.8,0.5)$) -- (2,0) -- (0,0) -- (0,2) -- cycle;
    \axes
    \draw[dotted,thick] (0,0) -- (1.25,2);
    \draw[critical] ($(1,1.6)+0.8*(-0.8,0.5)$) -- ($(1,1.6)-1.2*(-0.8,0.5)$);
    \putpoint{1,1.6}{below}{\theta}
    \putpoint{1.125,1.8}{right}{\theta'}
    \node[] at (1, -1) {\tiny{(a)}};
  \end{tikzpicture}
  \begin{tikzpicture}[scale=0.6]
    \fill[harmless] (0,0) -- (0.2,0) -- (1.2,2) -- (0,2) -- cycle;
    \axes
    \draw[dotted,thick] (0,0) -- (1.25,2);
    \draw[critical] (0.2,0) -- (1.2,2);
    \putpoint{1,1.6}{right}{\theta}
    \putpoint{0.7,0.7}{right}{\theta'}
    \node[] at (1, -1) {\tiny{(b)}};
  \end{tikzpicture}
  \begin{tikzpicture}[scale=0.6]
    \fill[harmless] (0,0) -- (0,0.2) -- (1.3,2) -- (2,2) -- (2,0) -- cycle;
    \axes
    \draw[dotted,thick] (0,0) -- (1.25,2);
    \draw[critical] (0,0.2) -- (1.3,2);
    \putpoint{1,1.6}{right}{\theta}
    \putpoint{0.3,0.8}{above}{\theta'}
    \node[] at (1, -1) {\tiny{(c)}};
  \end{tikzpicture}
  \centering
  \begin{tikzpicture}[tdplot_main_coords,scale=1]
    \coordinate (orig) at (0,0,0);
    \draw[black!50] (orig) -- (1,0,0);
    \draw[black!50] (orig) -- (0,1,0);
    \draw[black!50] (orig) -- (0,0,1);
    \path (1,0,0) edge[->,>=latex] node[left] {$\theta_1'$} (1.5,0,0);
    \path (0,1,0) edge[->,>=latex] node[below] {$\theta_2'$} (0,1.5,0);
    \path (0,0,1) edge[->,>=latex] node[right] {$\theta_3'$} (0,0,1.5);
    \fill[black,fill opacity=0.2,draw=black] (1,0,0) -- (0,1,0) -- (0,0,1) -- cycle;
    \putpoint{0.5,0.7,1}{above right}{\theta}
    \fill[harmless] (orig) -- ($(0,0.2,0.5)$) -- ($1.5*(1,1,1)+(0,0.2,0.5)$) -- ($3*(0.5,0.3,0)$) -- cycle;
    \fill[black,fill opacity=0.2] (orig) -- ($(0,0.2,0.5)$) -- (0.1,0.3,0.6) -- ($1/0.8*(0.5,0.3,0)$) -- cycle;
    \draw[critical] ($(0,0.2,0.5)$) -- ($1.5*(1,1,1)+(0,0.2,0.5)$);
    \draw[dotted,thick] (0.5,0.7,1) -- (orig);
  \end{tikzpicture}
  \caption{Harmless sets for randomized allocation rules.  The dotted line depicts the scaled versions of $\theta$, $\{\lambda\theta:\lambda\in\reals\}$.}
  \label{fig:TIE}
\end{figure}
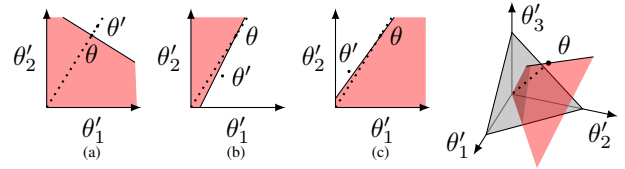

Our results for deterministic and universally truthful mechanisms are relatively positive, in that there is a significant harmless set of types which do not require verification.  For truthful in expectation mechanisms however, our results are much more negative.  Essentially, the only types in the harmless set are those which are a scaling or translation by adding the same constant to the value of each assignment of the original type, except in the special case where the agent is indifferent among two or more assignments, which adds additional dimensions to the harmless set.  For brevity, we state the theorem for the case where no such indifferences exist.

\begin{theorem}
\label{thm:truthful-expectation-no-zero}
Let $\theta$ be such that $\theta_{s_i} \neq \theta_{s_j}$ for all $i$ and $j$ and $m \geq 3$.
The harmless set of types of all single agent truthful in expectation mechanisms is $\{ \lambda \theta + \lambda'  \ones : \lambda \leq 1, \lambda' \in \reals \}$.
\end{theorem}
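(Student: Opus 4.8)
The plan is to apply the useful lemma (Lemma~\ref{lem:VerificationLocal-1}) after identifying $D_\theta^\perp$ for truthful in expectation mechanisms. Since these mechanisms may use any randomized allocation, the allocation set $A$ is the full probability simplex, so every allocation satisfies $a\cdot\ones=1$ and every difference satisfies $(a-a')\cdot\ones=0$; thus all allocation differences lie in $\ones^\perp=\{v\in\reals^m:\ones\cdot v=0\}$. First I would show these differences in fact span all of $\ones^\perp$: the simplex vertices $e_1,\dots,e_m$ are allocations, the hypothesis $\theta_{s_i}\neq\theta_{s_j}$ makes each $(e_i-e_j)\cdot\theta\neq0$ so that $(e_i,e_j)\in A^2_\theta$, and the vectors $\{e_i-e_j\}$ already span $\ones^\perp$. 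Because the orthogonal complement of a set depends only on its linear span, this gives $D_\theta^\perp=(\ones^\perp)^\perp=\mathrm{span}(\ones)$. Substituting into Lemma~\ref{lem:VerificationLocal-1} yields $H(\theta,F)=\{\lambda\theta+d^\perp:\lambda\le1,\,d^\perp\in\mathrm{span}(\ones)\}\cap\Theta=\{\lambda\theta+\lambda'\ones:\lambda\le1,\,\lambda'\in\reals\}$, which is the claim.

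The step I expect to require the most care is that $D_\theta$, as literally defined, is a union of lines rather than a linear subspace: for the full simplex it is $\{0\}\cup\{v\in\ones^\perp:v\cdot\theta\neq0\}$, which omits exactly the directions orthogonal to $\theta$ inside $\ones^\perp$, so the subspace hypothesis of Lemma~\ref{lem:VerificationLocal-1} is not met verbatim. I would argue that this does not matter for two reasons. The conclusion refers only to $D_\theta^\perp$, and as noted above $D_\theta^\perp=\mathrm{span}(\ones)$ holds regardless, since it is determined by the span of the differences. And the lemma's proof only ever invokes allocation directions $d$ with $d\cdot\theta\neq0$—in necessity the constructed $d=\bar d+\epsilon d_\theta$ has $d\cdot\theta=\epsilon\|d_\theta\|^2>0$, and in sufficiency the relevant direction comes from a pair the agent strictly prefers—so each such $d$ is a genuine scaled difference of two simplex allocations, and the only property of $D_\theta^\perp$ used is $d^\perp\cdot d=0$, which holds because $d^\perp\in\mathrm{span}(\ones)$ and $d\in\ones^\perp$. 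The distinct-coordinates assumption is what guarantees $d\cdot\theta\neq0$ along these directions (and $\theta\notin\mathrm{span}(\ones)$), while $m\ge3$ guarantees $\dim\ones^\perp\ge2$, which is needed so that a $\theta'$ lying off the plane $\mathrm{span}\{\theta,\ones\}$ can be strictly separated.

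If a self-contained argument is preferred, I would instead prove the two inclusions directly using the taxation principle (Observation~\ref{obs:power}). For sufficiency, take $\theta'=\lambda\theta+\lambda'\ones$ with $\lambda\le1$ and suppose toward a contradiction some implementable $f$ makes it harmful, with $f(\theta)=a$, $f(\theta')=a'$ and $a'\cdot\theta>a\cdot\theta$; the incentive inequalities for $\theta$ and $\theta'$ give $(a'-a)\cdot(\theta'-\theta)\ge0$, which because $(a'-a)\cdot\ones=0$ collapses to $(\lambda-1)(a'-a)\cdot\theta\ge0$, forcing $\lambda\ge1$ and hence $\lambda=1$; but at $\lambda=1$ the types $\theta$ and $\theta'$ value every allocation the same up to the additive constant $\lambda'$, so their induced preferences coincide and no strict gain is possible. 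For necessity, given $\theta'$ outside the claimed set it suffices to produce a critical allocation hyperplane through $\theta$ on whose far side $\theta'$ lies while $\theta$ strictly prefers that side's allocation, i.e.\ a direction $d\in\ones^\perp$ with $d\cdot\theta>0$ and $d\cdot(\theta'-\theta)>0$. Writing $u$ and $w$ for the orthogonal projections of $\theta$ and $\theta'-\theta$ onto $\ones^\perp$, distinct coordinates give $u\neq0$, and the assumption $\theta'\notin\{\lambda\theta+\lambda'\ones:\lambda\le1\}$ is exactly the statement that $w$ is a \emph{positive} multiple of $u$ (when $\theta'=\lambda\theta+\lambda'\ones$ with $\lambda>1$) or is linearly independent of $u$ (when $\theta'\notin\mathrm{span}\{\theta,\ones\}$); in the first case any $d$ with $d\cdot u>0$ works, and in the second, since $\dim\ones^\perp\ge2$, the two open half-spaces $\{d\cdot u>0\}$ and $\{d\cdot w>0\}$ meet, furnishing the required $d$.
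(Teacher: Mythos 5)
Your proof follows the same route as the paper: apply Lemma~\ref{lem:VerificationLocal-1} after observing that the scaled allocation differences fill out $\ones^\perp=\{d\in\reals^m : d\cdot\ones=0\}$, so the self-contained alternative you sketch at the end is a re-derivation of the lemma in this special case and is not needed. You are in fact more careful than the paper on two points: the paper's proof asserts $D_\theta^\perp=\{0\}$, which is inconsistent with the harmless set stated in the theorem (it should be $\mathrm{span}(\ones)$, exactly as you compute), and your observation that $D_\theta$ as literally defined omits the directions of $\ones^\perp$ orthogonal to $\theta$ --- so the lemma's subspace hypothesis holds only for the span --- is a genuine subtlety the paper glosses over, and your patch (the lemma only ever invokes directions $d$ with $d\cdot\theta\neq0$, and $D_\theta^\perp$ depends only on the span of the differences) is the right one.
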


\begin{proof}
Theorem~\ref{thm:truthful-expectation-no-zero} is merely a special case of Lemma~\ref{lem:VerificationLocal-1} with difference set $D_\theta = \{d \in \reals^m : d\cdot \ones = 0\}$.  Consider the possible allocation differences in the setting of Theorem~\ref{thm:truthful-expectation-no-zero}.  We can write this set as $\{\alpha (p-q) : p,q \in \Delta_m, \alpha \in \reals\}$, which is all scaled differences of distributons over outcomes.  But this set is simply the affine hull of the probability simplex, $\{d\in\reals^m : d\cdot \ones = 1\}$, shifted so that it passes through the origin, which is precisely the vector subspace $D_\theta$ above.  Applying Lemma~\ref{lem:VerificationLocal-1} immediately gives the result, as here $D_\theta^\perp = \{0\}$.  Finally, note that adding indifferences reduces the dimension of $D_\theta$ and correspondingly increases the dimension of $D_\theta^\perp$, and thus the harmless set of types.
\end{proof}

The proof follows from a Lemma~\ref{lem:VerificationLocal-1} that encompasses the case with indifferences as well as more general scenarios.  
A direct intuition is shown in Figure~\ref{fig:TIE}.  Parts (b) and (c) show that we can find pairs of allocations where the critical allocation hyperplane is arbitrarily close to the line between $\theta$ and the origin. So, by considering the intersection of the harmless types of all the possible pair of allocations, the resulting harmless types must all be along this line.  Part (a) shows that types along the line from $\theta$ going away from the origin are not harmless. When these figures are combined in the full three dimensional type space, we gain an extra degree of freedom as we can add a constant to the value for each allocation without changing the incentives, resulting in the harmless set illustrated in part (d).

\section{Multi-Agent Mechanisms}
\label{sec:multi-agent-mech}

As further motivation for our characterizations of the implementability of all deterministic single-agent mechanisms in Theorem~\ref{thm:deterministic},  in this section, we discuss how to identify an agent's harmless set of types in multi-agent scenarios by leveraging on the results presented in the previous sections. Essentially, this boils down to a three step process:
\begin{enumerate}
\item Choose a truthful multi-agent mechanism $\mathcal{M}_{ma}$.
\item Derive a set of corresponding single-agent allocation rules $F_{sa}$ which are implementable with payments.
\item Characterize $H(\theta,F_{sa})$ for each single agent.
\end{enumerate}
To illustrate this process, consider a scenario with $n$ unit-demand agents and two  items, $i_1$ and $i_2$. Given this, the set of possible assignments for each agent is $S = \{\emptyset, i_1 , i_2 \}$.
Assume that the mechanism $\mathcal{M}_{ma}$ is the incentive compatible Vickrey-Clarke-Groves (VCG) auction~\cite{Vickrey1961,Clarke1971,Groves1973} that allocates items to agents such that the social welfare is maximised and charges each agent her externality. Thus, the allocation and the payment of each agent depends on the types reported by the other agents.
The next step is to derive the corresponding single-agent allocation rules $F_{sa}$.
In the case of VCG, every single-agent implementable-with-payments allocation rule $f_{sa}$ is characterized by prices $p_1$ and $p_2$ for item $i_1$ and $i_2$, respectively, which correspond to the agent's externality.  $f_{sa}$ then assigns the agent the item (or nothing) she prefers at those prices.
One of the $f_{sa}$ characterized by prices $p_1$ and $p_s$ is shown in Figure~\ref{fig:reserveprices}(a) where it is possible to observe that if the agent's type is in the red area no item is allocated to the agent, if it is in the blue area then she gets item $i_1$, and if it is in the green area she gets item $i_2$.  Without restrictions on the types of the other agents, every non-negative pair of prices $p_1$ and $p_2$ is possible, and thus this defines the set of single-agent allocation rules $F_{sa}$. Because every pair of prices is possible, we can immediately apply Theorem~\ref{thm:deterministic} for the final step to characterize the harmless set.\footnote{Strictly speaking VCG is a family of mechanisms determined by tie-breaking rules, our results apply to identify the set of types that is simultaneously harmless for all tie-breaking rules.}

\begin{figure}[t]
\hspace{-0.3cm}
  \begin{tikzpicture}[scale=0.65] 
    \fill[noitem] (0,0) -- (0,1.5) -- (0.5,1.5) -- (0.5,0) -- cycle;
    \fill[item2] (0,1.5) -- (0,2) -- (1,2) -- (0.5,1.5) -- cycle;
    \fill[item1] (0.5,0) -- (0.5,1.5) -- (1,2) -- (2,2) -- (2,0) -- cycle;
   \putpoint{0,1.9}{left} {\theta_2'};
   \path (0,0) edge[->,>=latex] node[below] {$\theta_1'$} (2.1,0);
    \draw[critical][->,>=latex] (0,0) -- (0,2);
    \draw[critical] (0.5,1.5) -- (1,2);
    \draw[critical] (0.5,0) -- (0.5,1.5);
    \draw[critical] (0.5,1.5) -- (0,1.5);
    \putpoint{0.5,0}{below}{p_1}
    \putpoint{0,1.5}{left}{p_2}
    \node[] at (1, -1) {\tiny{(a)}};
  \end{tikzpicture}
  \begin{tikzpicture}[scale=0.65]
    \fill[noitem] (0,0) -- (0,1.5) -- (1.5,1.5) -- (1.5,0) -- cycle;
   \path (0,0) edge[->,>=latex] node[below] {$\theta_1'$} (2.1,0);
    \draw[critical][->,>=latex] (0,0) -- (0,2);
    \putpoint{0,1.9}{left} {\theta_2'}; 
    \draw[critical] (1.5,1.5) -- (2,2);
    \draw[critical] (1.5,0) -- (1.5,1.5);
    \draw[critical] (1.5,1.5) -- (0,1.5);
    \putpoint{1.5,0}{below}{r_1}
    \putpoint{0,1.5}{left}{r_2}
	\putpoint{0.5,1}{left}{\theta}
	\node[] at (1, -1) {\tiny{(b)}};
	\fill[noitem] (3,0) -- (3,0.75) -- (4.5,0.75) -- (4.5,0) -- cycle;
	\fill[noitem] (3.25,0.75) -- (4.5,2) -- (4.5,0.75) -- cycle;
    \path (3,0) edge[->,>=latex] node[below] {$\theta_1'$} (5.1,0);
    \draw[critical][->,>=latex] (3,0) -- (3,2); 
    \putpoint{3,1.9}{left} {\theta_2'}; 
    \draw[critical] (3.25,0.75) -- (4.5,2);
    \draw[critical] (4.5,0) -- (4.5,2);
    \draw[critical] (4.5,0.75) -- (3,0.75);
    \putpoint{4.5,0}{below}{r_1}
    \putpoint{3,0.75}{left}{r_2}
	\putpoint{3.5,1}{left}{\theta}
	\node[] at (4, -1) {\tiny{(c)}};
	\fill[noitem] (6,0) -- (6,1.5) -- (6.25,1.5) -- (6.25,0) -- cycle;
	\fill[noitem] (6.25,0) -- (6.25,1.5) -- (6.5,1.75) -- (6.5,0) -- cycle;
    \path (6,0) edge[->,>=latex] node[below] {$\theta_1'$} (8.1,0);
    \draw[critical][->,>=latex] (6,0) -- (6,2); 
    \putpoint{6,1.9}{left} {\theta_2'}; 
    \draw[critical] (6.25,1.5) -- (6.75,2);
    \draw[critical] (6.25,0) -- (6.25,1.5);
    \draw[critical] (6.25,1.5) -- (6,1.5);
    \draw[critical] (6.5,0) -- (6.5,1.75);
    \putpoint{6.25,0}{below}{r_1}
    \putpoint{6,1.5}{left}{r_2}
	\putpoint{6.5,1}{right}{\theta}
	\node[] at (7, -1) {\tiny{(d)}};
  \end{tikzpicture}
\caption{(a) Example of  allocation rule $f_{sa}$ of single-agent mechanisms induced by a multi-agent mechanism where no item is allocated to the agent whose type is in the red area, if it is in the blue area then she gets item $i_1$, and if it is in the green area she gets item $i_2$. (b, c, d) Examples of how the Harmless set of types changes for different values of reserve prices $r_1$ and $r_2$ set by the mechanism.}
\label{fig:reserveprices}
\end{figure}
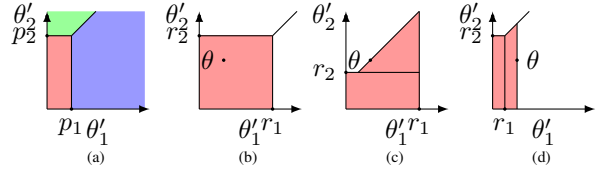

\begin{corollary}
\label{cor:VCG}
Let $F$ denote the set of implementable-with-payments deterministic allocation rules and let $F_{sa}$ denote the set of single-agent allocation rules derived from VCG. Then $H(\theta, F) = H(\theta,F_{sa})$.
\end{corollary}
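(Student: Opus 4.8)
The plan is to sandwich $H(\theta,F_{sa})$ between two copies of $H(\theta,F)$, exploiting that the VCG-derived rules $F_{sa}$ form a subfamily of $F$ that is nonetheless rich enough to realize every critical allocation hyperplane through $\theta$. One inclusion is immediate and the other is where the work lies.

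First I would dispatch the easy direction. Every single-agent rule $f_{sa}\in F_{sa}$ is deterministic (it hands the agent a single item or nothing) and is implementable-with-payments, since by construction the agent selects her value-maximizing option at the externality prices $p_1,p_2$. Hence $F_{sa}\subseteq F$, and by Observation~\ref{obs:intersection} intersecting over the larger family only shrinks the harmless set, giving $H(\theta,F)\subseteq H(\theta,F_{sa})$.

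For the reverse inclusion I would invoke Theorem~\ref{thm:deterministic}, which says that $H(\theta,F)$ is determined entirely by the finitely many critical allocation hyperplanes $l_{f^{\theta},\{a_i,a_j\}}\in L^{\theta}$, one for each of the three pairs drawn from $A=\{a_1,a_2,a_3\}$: it is the intersection of the half-spaces they bound that contain the origin. It therefore suffices to show that for each pair $\{a_i,a_j\}$ the family $F_{sa}$ already contains a rule whose harmless set lies inside the corresponding critical half-space, so that intersecting over $F_{sa}$ lands inside $H(\theta,F)$. For the two pairs involving the null allocation I would take $p_1=\theta_{i_1}$ (respectively $p_2=\theta_{i_2}$), which places the relevant cell boundary exactly on the critical hyperplane through $\theta$; for the item/item pair I would fix the relative price $p_1-p_2=\theta_{i_1}-\theta_{i_2}$, setting the smaller of the two prices to $0$. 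In each case the remaining price is chosen so that the third allocation does not interfere near $\theta$, so that the VCG rule agrees with the critical allocation rule $f^{\theta}_{\{a_i,a_j\}}$ on which reports it classifies as harmful. Combining the two inclusions yields $H(\theta,F)=H(\theta,F_{sa})$.

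The main obstacle is this reverse inclusion, and within it the one genuinely non-trivial point: verifying that the extra (null) cell present in each three-allocation VCG rule never spoils the reduction to the two-allocation critical rule. Concretely, one must check that the critical prices can always be realized as valid VCG prices---here the non-negativity required of the externalities $p_1,p_2$ is exactly matched by the agent's non-negative values $\theta_{i_1},\theta_{i_2}$ for the items---and that, with the irrelevant price set appropriately, both $\theta$ and any candidate harmful report $\theta'$ are sorted by the VCG rule using the same pair of allocations that the critical rule $f^{\theta}_{\{a_i,a_j\}}$ uses. Once this bookkeeping is in place, the equality follows from Theorem~\ref{thm:deterministic} exactly as the remark preceding the corollary anticipates.
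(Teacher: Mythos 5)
Your proposal is correct and follows essentially the same route as the paper, which proves the corollary simply by observing that every non-negative price pair $(p_1,p_2)$ arises from some profile of the other agents' types, so $F_{sa}$ is rich enough to realize every critical allocation hyperplane and Theorem~\ref{thm:deterministic} applies directly. Your version is merely more explicit about the two inclusions and about checking that the third (null) cell of each VCG rule does not interfere with the pairwise reduction---bookkeeping the paper leaves implicit.
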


This three step process can be applied to {\em any} truthful multi-agent mechanism.
In general, step 2 is an application of the taxation principle, and step 3 follows the logic of the proof of Theorem~\ref{thm:deterministic}.
Some cases, such as affine maximizers with finite agent weights and zero allocation weights~\cite{roberts79},
yield the same result as VCG, but others are more complex.  For example, in the same setting as before but with additional reserve prices $r_1$ and $r_2$, not every set of prices is possible, because $r_1$ and $r_2$ serve as lower bounds.  Thus, the harmless set of types depends also on the specific value of  $r_1$ and $r_2$ as shown in Figure~\ref{fig:reserveprices} (b,c,d).

\section{Allocation-dependent verification}

We have largely assumed that the set of verifiable types depends only on the true type.  Some authors assume, however, that the set of verifiable types also depends on the allocation received.  For example, in the combinatorial auction setting studied by  Fotakis, Krysta and Ventre~\citeyear{fotakis2014}, they assume that the mechanism designer  can only determine ex post whether the agent over-reported her value for the assignment she received. 

Our techniques still provide insight in this more refined setting.  Figure~\ref{fig:allocation-dependent} (a) shows both the harmless set (red) and the additional types not verified by their verification (blue) for the allocation rule that always assigns the agent her preferred assignment.  They are able to verify fewer types and still implement the allocation rule because, for this particular allocation rule the agent does not benefit while the harmless set satisfies the stronger condition that there is no rule under which they would benefit.  In fact our framework shows an even weaker verification would suffice, as any type below the horizontal line is harmless (see Figure~\ref{fig:det-mech-two-plus-zero} (c)).

\section{Reverse approach}

Our tools can be used also to answer the following question: given a reported type $\theta' \in \Theta$ and a class of mechanisms, what types need to be verified? In this case, the verification aims to check if a type $\hat{\theta} \in \Theta$ is the true type of the agent. Thus, from the perspective of the mechanism designer it is unnecessary to verify whether $\hat{\theta}$ is the agent's true type, if an agent with true type $\hat{\theta}$ cannot benefit by reporting $\theta'$. We call the types that need to be verified \emph{harmful}.

\begin{definition}
Given a reported type $\theta'$ and an allocation rule  $f$, the \emph{harmful set of types} $Z(\theta', f)$ is the set composed by the types $\hat{\theta} \in \Theta$ such that $f(\theta') \cdot \hat{\theta} > f(\hat{\theta}) \cdot \hat{\theta} $.
\end{definition}

Now, we show how to straightforwardly adapt our formulation to harmful sets of types.
Let's consider a particular such $f$ over two allocations with $c = p_1-p_2$ and a type $\theta'$. There are four possible cases for $Z(\theta', f)$.
 
\textbf{Case 1:} $\theta' \cdot a_1 \geq \theta' \cdot a_2$ and $f(\theta') = a_1$. An agent with any true type $\hat{\theta} \in \Theta$ such that $\hat{\theta} \cdot a_1 > \hat{\theta} \cdot a_2$ and $f(\hat{\theta}) = a_2$ can benefit by reporting $\theta'$. Thus, such types are the ones in the set $Z(\theta', f)$ and corresponds to all and only the types that belong to both the same half space as $\hat{\theta}$ w.r.t. the indifferent hyperplane and the opposite half space as $\hat{\theta}$ w.r.t. the allocation hyperplane.
 
\textbf{Case 2:} $\theta' \cdot a_1 \geq \theta' \cdot a_2$ and $f(\theta') = a_2$. No agent can strictly benefit from reporting such a type $\theta'$ because either the agent prefers $a_1$, and thus would be made worse off by doing so, or the agent prefers $a_2$ and so must already have $f(\theta) = a_2$. Thus $Z(\theta', f) = \emptyset$.
 
\textbf{Case 3:} $\theta' \cdot a_1 \leq \theta' \cdot a_2$ and $f(\theta') = a_2$. As in Case 1.
 
\textbf{Case 4:} $\theta' \cdot a_1 \leq \theta' \cdot a_2$ and $f(\theta') = a_1$. As in Case 2.

We now move to consider a set of allocation rules $F$.
Interestingly, we can answer two slightly different versions of the previous question. One is to identify the set of types $Z(\theta', F)$ that are harmful for all the allocation rules in $F$, or the set of types $\bar{Z}(\theta', f)$ that are harmful for least one allocation rule in $F$. Thus, the set $Z(\theta', F)$ is equal to the intersection of $Z(\theta', f), \forall f \in F$, while $\bar{Z}(\theta', f)$ is equal to the union of  $Z(\theta', f), \forall f \in F$.
These sets are shown in Figure 4 (b) for Example~\ref{example:deterministic}.
In this case, the set $Z(\theta', F)$ corresponds to the types that belong to the red area, while $\bar{Z}(\theta', F)$ includes also the types in the blue areas.

These two formulations can be thought of as upper and lower bounding the verification needed respectively.
The $Z(\theta', F)$ formulation captures what needs to be verified using ordinary verification, which defines verifications solely in terms of $(\theta,\theta')$ pairs.  More refined versions of verification, such as allocation-dependent verification, can potentially verify fewer types by conditioning the verification on the specific allocation rule.  Thus, $Z(\theta', F)$ captures a stronger notion of the types for whom reporting $\theta'$ is strictly better in every scenario.  Such a lower bound could be used, for example, to argue that there is no practical verification in a particular scenario even if we include refined notions of verification.

As mentioned before, we have chosen to present our primary approach as identifying the harmless set of types given a set of agents, allocation rules, and agents' true types  because it leads to appealing geometrical characterization. The reverse approach is more natural for direct application by a mechanism designer because it is directly phrased in terms of what to do for a given report.  In particular, the steps the designer has to follow to use verification as a substitute for money are the following. First, the designer decides which family of implementable-with-payments allocation rules to use and collects the agents' reported types. Then he verifies that each agent's true type is not in the set of types $Z(\theta', F)$ and, if necessary, penalizes the agents by, e.g., excluding them in the allocation. Finally, he applies the chosen allocation rule.    The downside of the reverse approach is that the geometric characterization is more complex.  In the end however, the two are equivalent as all that matters is identifying the set of $(\theta,\theta')$ pairs for which verification is needed.

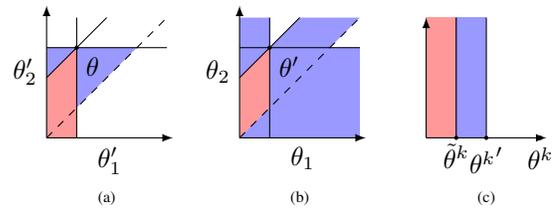
\begin{figure}[t]
  \centering
  \begin{tikzpicture}[scale=0.8]
    \fill[harmless] (0,0) -- (0,1) -- (0.5,1.5) -- (0.5,0) -- cycle;
    \fill[fotakis] (0,1) -- (0,1.5) -- (0.5,1.5) -- cycle;
    \fill[fotakis] (0.5,0.5) -- (0.5,1.5) -- (1.5,1.5) -- cycle;
    \axes
    \draw[indifference] (0,0) -- (2,2);
    \draw[critical] (0,1) -- (1,2);
    \draw[critical] (0.5,0) -- (0.5,2);
    \draw[critical] (2,1.5) -- (0,1.5);
    \putpoint{0.5,1.5}{below right}{\theta}
    \node[] at (1, -1) {\tiny{(a)}};
  \end{tikzpicture}~~~~
  $\qquad$
\hspace{-1cm}
  \begin{tikzpicture}[scale=0.8]
    \fill[fotakis] (2,0) -- (2,1.5) -- (1.50, 1.5) -- (2,2) -- (1,2) -- (0.5,1.5) -- (0.5,0.5) --(0,0) -- cycle;
     \fill[fotakis] (0,1) -- (0,2) -- (0.5,2) -- (0.5, 1.5) -- cycle;
	\fill[harmless] (0,0) -- (0,1) -- (0.5,1.5) -- (0.5,0.5) -- cycle;
    \axesaxes
    \draw[indifference] (0,0) -- (2,2);
    \draw[critical] (0,1) -- (1,2);
    \draw[critical] (0.5,0) -- (0.5,2);
    \draw[critical] (2,1.5) -- (0,1.5);
    \putpoint{0.5,1.5}{below right}{\theta'}
    \node[] at (1, -1) {\tiny{(b)}};
  \end{tikzpicture}
  \hspace{0.5cm}
    \begin{tikzpicture}[scale=0.8]
    \fill[harmless] (0,0) -- (0.5,0) -- (0.5,2) -- (0,2) -- cycle;
    \fill[fotakis] (0.5,0) -- (1,0) -- (1,2) -- (0.5,2) -- cycle;
    \path (0,0) edge[->,>=latex] node[below] {  } (2,0);
    \path (0,0) edge[->,>=latex] node[below] {  } (0,2);
    \putpoint{1.9,0}{below}{\theta^k}
    \draw[critical] (0.5,0) -- (0.5,2);
    \draw[critical] (1,0) -- (1,2);
    \putpoint{0.5,0}{below = -0.05}{\tilde{\theta}^{k}};
    \putpoint{1,0}{below}{\theta{^k}'};
    \node[] at (1, -1) {\tiny{(c)}};
  \end{tikzpicture}
\caption{(a) Example of the harmless set of types for a particular allocation when the verification is allocation-dependent. (b) Set of types $Z(\theta', F)$ (red area) and set of types $\bar{Z}(\theta', F)$ (red and blue areas) for Example~\ref{example:deterministic} (illustrated in Figure~\ref{fig:det-mech-two-plus-zero}).
(c) Single item auctions: Set of harmful types $Z_k({\theta^k}',f_{VCG}, \tilde{\theta}^k)$ (red area) for the case in which $\tilde{\theta}^k \leq {\theta^k}'$; set of harmful types $Z_k({\theta^k}',f_{VCG})$ (red and blue areas). }
\label{fig:allocation-dependent}
\end{figure}

\section{Application examples}
We conclude with three additional applications.
First we show an example of allocation dependent verification and reverse approach applied to second price auctions. For $k$-minded combinatorial auctions, we show that can recover previous results about when a particular verification is or is not sufficient and that we can extend them by characterizing a verification that would be sufficient for the case where it is not.  For $K$-facility location, we show how our framework allows us to recover a sufficient verification for a particular class of mechanisms and extend it to a much larger class.

\subsection{Second price auctions}
Consider the single item auction problem and the second price auction mechanism whose allocation rule, denoted by $f_{SP}$, assigns the single item $i$ to the agent $k \in K$ with the highest submitted bid ${\theta^k}'$.
From a single agent perspective, there are two possible assignments, namely $S = \{\emptyset, i \}$. Given this, a deterministic mechanism has two possible allocations, $a_1(\emptyset) = 1$ and $a_2(i) = 1$.
Here, we focus on the reverse approach with the aim to identify the set of types that needs to be verified for both the allocation-dependent case and non-allocation-dependent case (i.e., the one presented at the beginning of the paper).
 
In the allocation-dependent case, the designer takes into account $f_{SP}$ and observes the reported type ${\theta^k}'$ for all $k \in K$, while the real type $\theta^k$ of agent $k \in K$ is unknown.
Note that given any agent $k \in K$ and the allocation rule $f_{SP}$, only the highest bid of agents $K \setminus k$ affect $k$'s single agent mechanism; denote such bid with $\tilde{\theta}^k = \max \{{\theta^j}' | j \in K \setminus k\}$, and the single agent allocation rule that takes this information into account with $Z_k(\theta{^k}',f_{SP}^{\tilde{\theta}^k})$. Given this, we can focus on agent $k$'s single agent mechanism with threshold defined by $\hat{\theta}^k$ (as explained in the Multi-Agent Mechanisms section). In what follows, we focus on single agent mechanism and thus, for the sake of simplicity, we remove the index $k$. 
In particular, we denote with $Z(\theta',f_{SP}^{\tilde{\theta}})$ agent $k$'s set of harmful types given the allocation rule and the effect of the agents $K \setminus k$. 
The set $Z(\theta',f_{SP}^{\tilde{\theta}})$ for the case in which $\tilde{\theta} \leq \theta'$ corresponds to the red area in Figure~\ref{fig:allocation-dependent} (c). This set contains all the types $\hat{\theta} \in \Theta$ such that $f_{SP}^{\tilde{\theta}}(\hat{\theta}) \cdot \hat{\theta} < f_{SP}^{\tilde{\theta}}(\theta') \cdot \hat{\theta}$. Recall that for the allocation-dependent case, the verification happens only if the agent receives the item. Thus, $Z(\theta',f_{SP}^{\tilde{\theta}}) = \emptyset$ when $\tilde{\theta} > \theta'$.
 
In the non-allocation-dependent case the set $Z(\theta',f_{SP})$ is exactly as explained in the Reverse approach section and corresponds to the red and blue areas in Figure~\ref{fig:allocation-dependent} (c).
 
This example also highlights how easy is to compare different types of verification using our approach. Indeed, in this case, Figure~\ref{fig:allocation-dependent} (c) clearly illustrates how $Z(\theta',f_{SP}) \supseteq Z(\theta',f_{SP}^{\tilde{\theta}})$, and the scenarios in which this improvement is significant.

\subsection{$k$-minded combinatorial auctions}
\label{sec:appl-k-mind}

Consider the (known) $k$-minded combinatorial auction setting studied by Fotakis, Krysta and Ventre~\citeyear{fotakis2014}.  In this setting a set of goods must be allocated to a set of agents, and an agent has some value for exactly $k$ subsets of them.  (More precisely, she receives some set of items and her utility is that of the most valuable of the $k$ sets of which they are a superset.)  They showed that for $k = 1$, all implementable-with-payments allocation rules are also implementable using a verification that prevents agents from overbidding, while for $k > 2$ this is not the case.  This result follows easily from our results, that also provide a nice visual intuition for what goes wrong in the $k = 2$ case.

For $k = 1$, from a single agent perspective there are effectively two possible assignments, $S = \{s_1,s_2\}$: the agent does not get her desired bundle $\theta_{s_1} = 0$ or she does and gets value $\theta_{s_2}$.  From Theorem~\ref{thm:multipleAllocations} (deterministic mechanisms) or Theorem~\ref{thm:truthful-expectation-no-zero} (randomized mechanisms), we see that the harmless types are exactly those where the agents underbids.  Thus, being able to verify the agent did not overbid suffices.

For $k=2$, we simply add a new assignment $s_3$.  Letting $\theta_2 = \theta_{s_3}$ and $\theta_1 = \theta_{s_2}$, Figure~\ref{fig:det-mech-two-plus-zero}(a) shows the harmless set for deterministic mechanisms with the $s_1$ dimension omitted as $\theta_{s_1} = 0$.  Again applying Theorem~\ref{thm:multipleAllocations}, the harmless set no longer includes all types where the agent underbids.  In the example shown, the agent prefers $s_3$ to $s_2$, and so types where the agent underbids on $s_3$ but underbids more on $s_2$ are not harmless. Thus, this is exactly the sort of misreport that makes being able to verify that the agent has not overbid insufficient.  It also shows that a sufficient verification is that the agent has correctly reported her value for her preferred assignment and not overreported her value for the other assignment. (This can even be weakened to not overbidding on the assignment received, and, if the agent receives her preferred assignment, that she additionally did not underbid her value relative to the other assignment.)
Whether this verification is reasonable or not depends on the application.

For randomized mechanisms, the technical Lemma~\ref{lem:VerificationLocal-1} used to prove Theorem~\ref{thm:truthful-expectation-no-zero} can be directly applied to yield the following theorem.
\begin{theorem}
\label{thm:truthful-expectation-with-zero}
Let $\theta$ be such that $\theta_{s_i} \neq \theta_{s_j}$ for all $i$ and $j$ and $m \geq 3$.
The harmless set  of all single agent truthful in expectation mechanisms with $\theta_{s_1} = 0$ is $\{ \lambda \theta: \lambda \leq 1\}$.
\end{theorem}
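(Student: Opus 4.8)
The plan is to derive this statement as a direct corollary of Lemma~\ref{lem:VerificationLocal-1}, in exactly the manner that Theorem~\ref{thm:truthful-expectation-no-zero} was obtained, the only new ingredient being the effect of restricting the type space via $\theta_{s_1} = 0$. First I would verify that the hypotheses of the Lemma are met. Since the mechanisms are truthful in expectation, the allocation set $A$ is the full simplex $\Delta_m$, so the scaled allocation differences are exactly $\{d \in \reals^m : d\cdot\ones = 0\}$. Because $\theta$ has pairwise distinct coordinates, the point-mass (deterministic) allocations $e_i$ satisfy $(e_i - e_j)\cdot\theta = \theta_{s_i} - \theta_{s_j} \neq 0$ for all $i \neq j$, so these pairs lie in $A^2_\theta$ and their differences already span $\{d : d\cdot\ones = 0\}$. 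Hence $D_\theta = \{d : d\cdot\ones = 0\}$, which is a linear subspace as required, and its orthogonal complement is $D_\theta^\perp = \{\lambda'\ones : \lambda'\in\reals\}$.

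Applying Lemma~\ref{lem:VerificationLocal-1} then yields $H(\theta,F) = \{\lambda\theta + \lambda'\ones : \lambda\leq 1,\, \lambda'\in\reals\}\cap\Theta$. The crucial final step --- and the only place where this theorem departs from the no-zero case --- is to intersect with the restricted type space $\Theta = \{\hat{\theta} : \hat{\theta}_{s_1} = 0\}$ dictated by the null assignment. The $s_1$-coordinate of $\lambda\theta + \lambda'\ones$ equals $\lambda\,\theta_{s_1} + \lambda' = \lambda'$ because $\theta_{s_1} = 0$; membership in $\Theta$ therefore forces $\lambda' = 0$. This collapses the extra $\ones$ degree of freedom present in Theorem~\ref{thm:truthful-expectation-no-zero}, leaving precisely $\{\lambda\theta : \lambda\leq 1\}$. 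Conversely, every such $\lambda\theta$ has vanishing $s_1$-coordinate, hence lies in $\Theta$, and is harmless by the Lemma, giving the reverse inclusion and thus equality.

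I do not expect a genuine obstacle here; the content is essentially bookkeeping on top of Lemma~\ref{lem:VerificationLocal-1}. The one point requiring care is the interaction between the Lemma's ambient characterization and the constrained type space: one must remember that $H(\theta,F)$ is by definition a subset of $\Theta$, so the fixed zero coordinate eliminates exactly the translation direction $\ones$ that otherwise enlarges the harmless set. A secondary check is confirming that fixing $\theta_{s_1}=0$ does not shrink $D_\theta$ below the full hyperplane --- it does not, since $D_\theta$ depends only on which pairs of allocations $\theta$ is non-indifferent between, and the distinctness hypothesis guarantees this is every pair of point masses.
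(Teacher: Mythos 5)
Your proof is correct and rests on the same key ingredient as the paper's, namely a direct application of Lemma~\ref{lem:VerificationLocal-1}, but the bookkeeping is done differently. The paper's (very terse) proof eliminates the translation direction by changing the allocation model: it implicitly drops the null coordinate so that allocations become sub-probability vectors, the scaled allocation differences span the whole space, $D_\theta = \reals^m$, and hence $D_\theta^\perp = \{0\}$. You instead keep the original parametrization, so $D_\theta = \{d : d\cdot\ones = 0\}$ and $D_\theta^\perp = \{\lambda'\ones\}$ exactly as in Theorem~\ref{thm:truthful-expectation-no-zero}, and you kill the $\ones$ direction using the $\cap\,\Theta$ already built into the Lemma's conclusion: since $\theta_{s_1}=0$, the $s_1$-coordinate of $\lambda\theta + \lambda'\ones$ is $\lambda'$, which the constraint $\hat{\theta}_{s_1}=0$ forces to vanish. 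Both routes are valid and give the same set; yours has the advantage of making explicit where the extra degree of freedom goes and of not requiring a reinterpretation of $A$, while the paper's has the advantage that it does not rely on the type space being restricted (it attributes the collapse of $D_\theta^\perp$ entirely to the richer set of allocation differences). Your closing sanity checks --- that $D_\theta$ is unaffected by fixing $\theta_{s_1}=0$ and that the Lemma's necessity construction only ever evaluates $f$ at points of $\Theta$ --- are exactly the right points to verify.
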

\begin{proof}
Apply Lemma~\ref{lem:VerificationLocal-1}
with the expanded $D_\theta = \reals^m$ (because the total probability of non-zero assignments can now be less than one).  Thus $D_\theta^\perp = \{0\}$ and the result follows.
\end{proof}
Note that, in contrast to Theorem~\ref{thm:truthful-expectation-no-zero}, adding a constant to the value of each non-null allocation is no longer harmless because it changes values relative to the null allocation.

\subsection{$K$-facility location}

Consider a set of $G$ potential locations where a set of $K$ facilities will ultimately be located ($|G| > |K|$).  Agents will be assigned to one of the $K$ facilities and have preferences over the facility they are assigned to.  In particular their utility for being assigned to the facility at location $g \in G$ is $\theta_g = b - c_g$ where $b$ is the benefit of using a facility and $c_g$ is the cost associated with using the facility at location $g$.  We study the resulting mechanism design problem under the assumption that the mechanism can enforce the assignment of an agent to a particular facility, an assumption called {\em cluster imposing} in the literature~\cite{diodato2016}.  We can directly apply Theorem~\ref{thm:deterministic} to characterize the verification needed to ensure that all deterministic implementable-with-payments allocation rules are truthful with this verification.  At this level of abstraction, Figure~\ref{fig:det-mech-two-plus-zero}(a) captures the relevant pairwise constraints, and the overall harmless set is not substantively different than in our analysis of VCG (which uses an implementable-with-payments allocation rule for this problem) except that the null assignment is not permitted.

Our results become more interesting when we study the restricted case where the agents and possible locations are on a line and $c_g$ is the distance from the agent's location to $g$.  This setting was previously studied by Ferraioli et al.~(\citeyear{diodato2016}), who showed that in addition to the cluster imposing assumption, a combination of two (allocation dependent) verifications is sufficient to implement every {\em efficient} deterministic mechanism (with fixed tie-breaking).  The first, {\em no underbidding}, ensures the agent cannot report that she is closer to her assigned facility than she actually is.  The second, {\em direction imposing}, ensures the agent cannot report she is to the left of her assigned facility when she is actually to the right (and vice versa).
Because agent locations are restricted to be on the line, agent types are quite restricted.  When restricting to the pairwise case, if (WLOG) the agent prefers the right location, the harmless set for all implementable-with-payments allocation rules consists of all types to the left of the agent along the line.  If the agent's location is in between the two possible facility locations, then their two verifications exactly cover the complement of the harmless set: no underbidding prevents reports to the right of the agent's location but left of the facility location while direction imposing prevents reports to the right of the facility location.  If the agent is located to the right of both facilities, neither verification prevents misreports further to the right.  Instead, the restriction to allocation rules which use fixed tie-breaking ensures that these reports never change the allocation, so the harmless set in this case is actually the entire space.  

In addition to providing an intuitive illustration of why their verifications are sufficient (and in a sense necessary as well), we can strengthen their characterization to cover a larger class of mechanisms.  In particular, let a {\em fixed tie-breaking implementable-with-payments allocation rule} be an implementable-with-payments allocation rule with the additional property that all types which are indifferent between two allocations at prices implied by the allocation rule receive the same allocation.
\begin{corollary}
In the cluster imposing case, the no underbidding and direction imposing verifications suffice to implement all (efficient and approximate) fixed tie-breaking implementable-with-payments allocation rules.
\end{corollary}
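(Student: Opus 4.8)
The plan is to fix an arbitrary fixed tie-breaking implementable-with-payments allocation rule $f$ and show directly that every profitable misreport is caught by one of the two verifications, so that $\mathcal{M}_v = (f,V)$ is truthful by definition. Let the agent's true location be $x$ (inducing type $\theta$, with $\theta_g = b - |x-g|$), let $\hat\theta$ be a profitable misreport at location $\hat x$, and set $g^0 = f(\theta)$ and $g^* = f(\hat\theta)$. Profitability means $\theta_{g^*} > \theta_{g^0}$, i.e.\ the agent strictly prefers the facility $g^*$ she obtains by lying. By Theorem~\ref{thm:multipleAllocations} only this pair of facilities is relevant, so I would reduce everything to the two-allocation picture of Figure~\ref{fig:det-mech-two-plus-zero}(a) for $\{g^0,g^*\}$. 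The goal is then to show that $(\theta,\hat\theta)$ violates no underbidding or direction imposing relative to the \emph{received} facility $g^*$, on which the allocation-dependent verification is keyed.

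First I would use the taxation principle to write $f$ via prices $p_g$ with $f(\cdot)\in\arg\max_g(\cdot_g - p_g)$. The two membership conditions $f(\theta)=g^0$ and $f(\hat\theta)=g^*$ then yield the pairwise inequalities $\theta_{g^*}-\theta_{g^0}\le p_{g^*}-p_{g^0}\le \hat\theta_{g^*}-\hat\theta_{g^0}$. Writing the monotone preference function $\phi(y)=|y-g^0|-|y-g^*|$ (so $\theta_{g^*}-\theta_{g^0}=\phi(x)$ and likewise for $\hat x$), these become $\phi(x)\le p_{g^*}-p_{g^0}\le\phi(\hat x)$, while profitability is $\phi(x)>0$. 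Here $\phi$ is constant at one of the two extreme values $\pm|g^0-g^*|$ outside the interval spanned by the facilities, and strictly monotone between them.

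Next I would suppose for contradiction that $(\theta,\hat\theta)$ is \emph{not} caught, i.e.\ $\hat x$ lies on the same side of $g^*$ as $x$ (direction imposing not triggered) and $|\hat x-g^*|\ge|x-g^*|$ (no underbidding not triggered); together these force $\hat x$ to lie weakly farther from $g^*$ than $x$ on the same side. The argument then splits on the position of $x$. When $x$ lies strictly between the two facilities, ``not caught'' pushes $\hat x$ into the strictly monotone part of $\phi$ away from $g^*$, giving $\phi(\hat x)<\phi(x)$ and contradicting $\phi(\hat x)\ge\phi(x)$; no tie-breaking assumption is needed here. The hard part is the remaining regime, where $x$ lies beyond her preferred facility $g^*$ (the agent is outside both facilities): now ``not caught'' places both $x$ and $\hat x$ on the flat part of $\phi$, so $\phi(x)=\phi(\hat x)=|g^0-g^*|$, and the sandwich forces $p_{g^*}-p_{g^0}=|g^0-g^*|$. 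Thus $x$ and $\hat x$ are both exactly indifferent between $g^0$ and $g^*$ at $f$'s prices, so the fixed tie-breaking hypothesis forces $f(\theta)=f(\hat\theta)$, contradicting $g^0\ne g^*$. This is precisely the step that fails for general rules, matching the earlier observation that in this regime the harmless set is the whole space only because of fixed tie-breaking.

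Finally, having shown that no profitable misreport escapes the two verifications for the pair $\{g^0,g^*\}$, I would invoke Theorem~\ref{thm:multipleAllocations} to conclude the same across all pairs and hence for $f$ itself; since $f$ was an arbitrary fixed tie-breaking implementable-with-payments rule, the verification suffices for the whole class (efficient and approximate alike). I expect the main obstacle to be exactly the outside-the-facilities regime: it is the only case not settled by monotonicity and the one that genuinely requires the fixed tie-breaking property, so the write-up should make the forced indifference and the consequent collapse of the two allocations fully explicit.
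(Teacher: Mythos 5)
Your proof is correct and follows essentially the same route as the paper's justification: reduce to the relevant pair of facilities, observe that an uncaught profitable misreport must lie weakly farther from the received facility on the same side, dispose of the between-facilities regime by monotonicity of the preference difference, and use fixed tie-breaking to handle the regime where the agent sits beyond her preferred facility (where all such types are exactly indifferent at the implied prices). Your taxation-principle ``price sandwich'' is just a more explicit formalization of the argument the paper sketches in the paragraph preceding the corollary.
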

We can also shed more light on whether their verifications are necessary.  They show that eliminating any one of them breaks truthfulness, which our results succintly illustrated.  However, their verifications are stronger than necessary in that they are still applied in the case where the agent would already receive her preferred allocation by reporting truthfully (and so the harmless set is the entire space).  So in principle the verifications could be weakened to no-underbidding-when-not-receiving-preferred-allocation and direction-imposing-when-not-receiving-preferred-allocation respectively.

\section{Conclusion}

We have introduced a general methodology that can be used to identify the harmless set of types. Knowing this set helps a mechanism designer identify the assumptions needed in order to use partial verification as a substitute for money in a new domain. We have pointed out that the power of verification depends on the class of mechanisms considered: in the case of deterministic and universally truthful mechanisms the harmless set of types is usually significantly larger than in case of truthful in expectation mechanisms. Furthermore, we discuss two possible extensions to our results: allocation-dependent verification and the reverse approach. We conclude with examples showing how our approach can be used in three application domains and how our results replicate and extend existing results in the literature.

\bibliographystyle{aaai}  
\bibliography{sigproc}

\end{document}